\documentclass[12pt,a4paper]{article}
\usepackage{epsf, graphicx}
\usepackage{latexsym,amsfonts,amsbsy,amssymb}
\usepackage{amsmath,amsthm}
\usepackage{cite}
\usepackage{indentfirst}
\usepackage{bm}
\usepackage{cleveref}

\textwidth=15cm \textheight=22cm \topmargin 0 cm \oddsidemargin 0in
 \evensidemargin 0in \baselineskip= 12pt
\parindent=12pt
\parskip=3pt
\overfullrule=0pt
\makeatletter

\@addtoreset{equation}{section} \makeatother
\newtheorem{theorem}{Theorem}[section]
\newtheorem{lemma}{Lemma}[section]
\newtheorem{corollary}{Corollary}[section]
\newtheorem{remark}{Remark}[section]
\newtheorem{definition}{Definition}[section]
\newtheorem{proposition}{Proposition}[section]
\newtheorem{example}{Example}[section]
\setlength{\parindent}{0.9cm}
\setlength{\parskip}{3ptplus1ptminus2pt}
\setlength{\baselineskip}{12pt plus2pt minus1pt}
\setlength{\topmargin}{2.5 cm} \setlength{\headheight}{0cm}
\setlength{\textheight}{26.5 cm} \setlength{\textwidth}{15 cm}
\makeatletter \@addtoreset{equation}{section} \makeatother
\textheight=24cm \textwidth=16cm
\parskip = 0.5cm

\DeclareMathOperator{\wt}{wt}
\topmargin=1cm \oddsidemargin=0cm \evensidemargin=0cm
\textwidth=15cm \textheight=22cm \topmargin 0 cm \oddsidemargin 0in
 \evensidemargin 0in \baselineskip= 12pt
\parindent=12pt
\parskip=3pt
\overfullrule=0pt
\linespread{1.12}

\begin{document}
\title{Linear Codes from Simplicial Complex over $\mathbb{F}_{2^n}$}

\author{Hongwei Liu,\quad Zihao Yu}
\date{School of Mathematics and Statistics,
 Central China Normal University, Wuhan 430079, China}
\maketitle
\insert\footins{\small{\it Email addresses} :
 hwliu@mail.ccnu.edu.cn (Hongwei Liu); zihaoyu@mails.ccnu.edu.cn (Zihao Yu).}
{\centering\section*{Abstract}}
 \addcontentsline{toc}{section}{\protect Abstract}
 \setcounter{equation}{0}
In this article we mainly study linear codes over $\mathbb{F}_{2^n}$ and their binary subfield codes. We construct linear codes over $\mathbb{F}_{2^n}$ whose defining
sets are the certain subsets of $\mathbb{F}_{2^n}^m$ obtained from mathematical objects called simplicial complexes. We use a result in LFSR sequences to illustrate
the relation of the weights of codewords in two special codes obtained from simplical complexes and then determin the parameters of these codes. We construct five
infinite families of distance optimal codes and give sufficient conditions for these codes to be minimal.

\medskip
\noindent{\large\bf Keywords: }\medskip linear code, subfield code, simplicial complex, Griesmer code.

\noindent{\bf2010 Mathematics Subject Classification}: 94B05, 94B65.

\section{Introduction}
Let $\mathbb{F}_{q^n}$ be the finite field with $q^n$ elements, where $q$ is a power of prime. Given a linear code $C$ over $\mathbb{F}_{q^n}$, Ding and Heng
\cite{ding2019subfield} constructed a new linear code $C^{(q)}$ over $\mathbb{F}_q$, which is termed as the subfield code with respect to $C$. Chang and Hyun
\cite{chang2018linear} obtained a class of optimal binary linear codes and a class of minimal codes using Boolean functions associated with simplicial complexes. In
the later work, Hyun et al. \cite{hyun2020infinite} obtained some infinite families of optimal binary linear codes by choosing the specific defining set from
simplical complexes. More generally, Hyun et al. \cite{hyun2020optimal} used posets in place of simplicial complexes and obtained some optimal and minimal binary
linear codes which are not satisfying the Ashikhmin-Barg condition in \cite{ashikhmin1998minimal}. From then on, more works about linear codes via simplical complexes
are studied over $\mathbb{F}_{2^r}$, $r=1,2,3$. In \cite{zhu2021few}, Zhu and Wei constructed optimal and minimal quaternary  linear codes and gave their weight
distributions. Wu et al. \cite{wu2022quaternary} studied quaternary linear codes and their binary subfield codes, they presented two infinite families of optimal
linear codes. In \cite{sagar2022linear}, the authors used simplicial complexes to construct octanary linear codes and studied their corresponding binary subfield
codes. They obtained five infinite families of distance optimal linear codes and gave sufficient conditions of some of these codes to be minimal. Furthermore,
they proposed a few conjectures about the results of linear codes from simplical complexes over arbitrary $\mathbb{F}_{2^n}$.

Based on the above researches, in this article, we will study linear codes over arbitrary $\mathbb{F}_{2^n}$ whose defining set are obtained from simplical complexes
and their corresponding binary subfield codes. We will prove the conjectures proposed in \cite{sagar2022linear} by using the LFSR sequences and obtain some distance
optimal and minimal codes. Some general results in the previous works can be found in this article.

The rest of this article is arranged as follows. In Section 2, we recall some concepts in coding theory. We also introduce the definitions and properties of
simplicial complexes and the LFSR sequences. In Section 3, we give the relation of the weights of codewords in two codes obtained from simplical complexes. In Section
4, we determin the parameters of these codes and obtain some infinite families of distance optimal and minimal codes. In Section 5, we study the subfield codes with
respect to the codes in the previous sections. In Section 6, we draw a conclusion about our article.
\section{Preliminaries}
\subsection{The basic concepts in coding theory}
Throughout the article, for $m\in \mathbb{N}$, we denote the set $\{1,2,\ldots,m\}$ by $[m]$. Let $C$ be an $[n,k,d]$-linear code over $\mathbb{F}_q$.
Let $A_i$ be the number of codewords of weight $i$ in $C$. The sequence $(1,A_1,\ldots,A_n)$ is called the weight distribution of $C$. In addition,
if the cardinanity of the set $\{1\leq i\leq n:A_i\neq0\}$ is $t$, then $C$ is called a $t$-$weight$ linear code. The code $C$ is called distance optimal if
there exists no $[n,k,d+1]$-linear code.

Next we recall the well known Griesmer bound on linear codes.
\begin{lemma}{\rm(\cite{griesmer1960bound})}
    \textup{(Griesmer Bound)} If $C$ is an $[n,k,d]$-linear code over $\mathbb{F}_q$, then we have
    \begin{equation}\label{Griesmer Bound}
        \sum_{i=0}^{k-1}\left\lceil\frac{d}{q^i}\right\rceil \leq n,
    \end{equation}
    where $\left\lceil\cdot\right\rceil$ denotes the ceiling function.
\end{lemma}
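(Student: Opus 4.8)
The plan is to prove the bound by induction on the dimension $k$, using the standard device of the \emph{residual code} with respect to a minimum-weight codeword. For $k=1$ the claim reduces to $\lceil d\rceil=d\le n$, which holds because no codeword can have weight exceeding the length. So I would assume the bound for all linear codes of dimension $k-1$ over $\mathbb{F}_q$ and deduce it for an $[n,k,d]$-code $C$.

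First I would fix a codeword $c\in C$ of minimum weight $d$ and, after permuting coordinates, assume $\operatorname{supp}(c)=\{1,\dots,d\}$. Let $\pi$ be the projection of $\mathbb{F}_q^n$ onto the last $n-d$ coordinates and set $C'=\pi(C)$, the residual code. The first step is to check that $\dim C'=k-1$; equivalently, that the kernel of $\pi|_C$ — the codewords supported inside $\operatorname{supp}(c)$ — is exactly the line $\langle c\rangle$. This follows from minimality: if $c'$ is such a codeword with $c_1'\neq 0$, then $c'-(c_1'/c_1)c$ is supported in $\{2,\dots,d\}$, hence has weight $<d$ and must vanish, so $c'$ is a scalar multiple of $c$. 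Thus $C'$ is an $[n-d,\,k-1,\,d']$-code.

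The heart of the argument — and the step I expect to be the main obstacle — is the lower bound $d'\ge\lceil d/q\rceil$ on the residual distance. For this I would pick a nonzero $\bar x\in C'$, lift it to some $x\in C$, and examine the $q$ codewords $x+\beta c$ with $\beta\in\mathbb{F}_q$, all of which restrict to $\bar x$ on the last $n-d$ coordinates and are therefore nonzero, so each has weight at least $d$. For every position $j\in\{1,\dots,d\}$ the entry $x_j+\beta c_j$ vanishes for exactly one $\beta$ (since $c_j\neq 0$), so across the $q$ shifts the first $d$ coordinates contribute exactly $(q-1)d$ nonzero entries in total. Summing weights over $\beta$ gives $(q-1)d+q\,\wt(\bar x)=\sum_{\beta}\wt(x+\beta c)\ge qd$, whence $\wt(\bar x)\ge d/q$, and since weights are integers, $\wt(\bar x)\ge\lceil d/q\rceil$.

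Finally I would invoke the inductive hypothesis on $C'$ and combine it with the ceiling identity $\lceil\lceil d/q\rceil/q^{i}\rceil=\lceil d/q^{i+1}\rceil$, valid for positive integers, together with monotonicity of the ceiling function. This yields
\begin{equation*}
n-d\;\ge\;\sum_{i=0}^{k-2}\left\lceil\frac{d'}{q^{i}}\right\rceil\;\ge\;\sum_{i=0}^{k-2}\left\lceil\frac{d}{q^{i+1}}\right\rceil\;=\;\sum_{j=1}^{k-1}\left\lceil\frac{d}{q^{j}}\right\rceil,
\end{equation*}
and adding $d=\lceil d/q^{0}\rceil$ to both sides gives $n\ge\sum_{j=0}^{k-1}\lceil d/q^{j}\rceil$, completing the induction.
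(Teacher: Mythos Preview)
Your argument is the standard residual-code proof of the Griesmer bound and is correct. One small remark: in showing $\ker(\pi|_C)=\langle c\rangle$, you phrase the case analysis as ``if $c'$ is such a codeword with $c_1'\neq 0$,'' but you should note explicitly that any nonzero $c'$ supported in $\{1,\dots,d\}$ automatically has $c_1'\neq 0$, since $\wt(c')\le d$ forces $\wt(c')=d$ by minimality and hence $\operatorname{supp}(c')=\{1,\dots,d\}$.

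As for comparison with the paper: there is nothing to compare. The paper does not prove this lemma at all --- it simply states the Griesmer bound with a citation to \cite{griesmer1960bound} and moves on, using it as a black box to certify that certain explicitly constructed codes are distance optimal. Your write-up supplies the classical proof that the paper omits.
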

An $[n,k,d]$-linear code is called a Griesmer code if the equality holds in \eqref{Griesmer Bound}. Note that every Griesmer code is distance optimal.

Let $D=\{d_1<d_2<\cdots<d_n\}$ be an ordered subset of $\mathbb{F}_q^m$. We can construct a linear code defined by
\begin{equation}
    C_D=\{(v\cdot d_1,v\cdot d_2,\cdots,v\cdot d_n):v\in \mathbb{F}_q^m\},
\end{equation}
where $x\cdot y=\sum_{i=1}^n x_iy_i$ for $x=(x_1,x_2,\cdots,x_m),y=(y_1,y_2,\cdots,y_m)$. The set $D$ is called the defining set of $C_D$.

For $x=(x_1,x_2,\cdots,x_m)\in\mathbb{F}_q^m$, the set $$\operatorname{Supp}(x)=\{i\in[m]:x_i\neq 0\}$$ is called the support of $x$. Note that the Hamming
weight of $x\in\mathbb{F}_q^m$ is $\wt(x)=|\operatorname{Supp}(x)|$.

For $x,y\in \mathbb{F}_q^m$, if $\operatorname{Supp}(y)\subseteq \operatorname{Supp}(x)$, we say $x$ covers $y$. Let $C$ be a code over $\mathbb{F}_q$.
A codeword $u_0$ in $C\backslash\{0\}$ is called minimal if $u\in C\backslash\{0\}$, $u_0$ covers $u$ implies $u=\lambda u_0$ for some $\lambda\in \mathbb{F}_q^*$.
A code $C$ over $\mathbb{F}_q$ is called a minimal code if every nonzero codeword of $C$ is a minimal codeword. There is a sufficient condition for a linear code to
be minimal.
\begin{lemma}\label{minimal} {\rm(\cite{ashikhmin1998minimal})}
    If the ritio of the minimum and the maximum Hamming weights of nonzero codewords in a linear code $C$ over $\mathbb{F}_q$ is larger than $\frac{q-1}{q}$, then
    $C$ is a minimal code.
\end{lemma}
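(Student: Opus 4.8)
The plan is to argue by contradiction, assuming $C$ fails to be minimal and extracting from that failure a violation of the weight-ratio hypothesis. Write $w_{\min}$ and $w_{\max}$ for the minimum and maximum Hamming weights of nonzero codewords of $C$. If $C$ is not minimal, then by the definition above there exist nonzero codewords $u_0,u\in C$ such that $u_0$ covers $u$ (so $\operatorname{Supp}(u)\subseteq\operatorname{Supp}(u_0)$) while $u\neq\lambda u_0$ for every $\lambda\in\mathbb{F}_q^*$. The key device is to consider the whole line of codewords $u_0-xu$ as $x$ ranges over $\mathbb{F}_q$ and to evaluate $\sum_{x\in\mathbb{F}_q}\wt(u_0-xu)$ in two different ways.

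First I would compute this sum exactly by exchanging the order of summation and classifying each coordinate $i\in[n]$ according to the covering relation. A coordinate $i\notin\operatorname{Supp}(u_0)$ contributes nothing, since then $(u_0)_i=u_i=0$. A coordinate $i\in\operatorname{Supp}(u_0)\setminus\operatorname{Supp}(u)$ has $(u_0)_i\neq0$ and $u_i=0$, so $(u_0-xu)_i\neq0$ for all $q$ values of $x$. A coordinate $i\in\operatorname{Supp}(u)$ has both entries nonzero (this is where $\operatorname{Supp}(u)\subseteq\operatorname{Supp}(u_0)$ is used), so $(u_0-xu)_i$ vanishes for exactly one $x$ and is nonzero for the remaining $q-1$. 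Adding these contributions gives the identity
\begin{equation*}
\sum_{x\in\mathbb{F}_q}\wt(u_0-xu)=q\bigl(\wt(u_0)-\wt(u)\bigr)+(q-1)\wt(u)=q\,\wt(u_0)-\wt(u).
\end{equation*}

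For the second evaluation I would bound the same sum from below. The term $x=0$ equals $\wt(u_0)$, while for each $x\neq0$ the codeword $u_0-xu$ is nonzero — otherwise $u=x^{-1}u_0$ would be a scalar multiple of $u_0$, contradicting our choice — and hence has weight at least $w_{\min}$. This yields $q\,\wt(u_0)-\wt(u)\geq\wt(u_0)+(q-1)w_{\min}$, i.e.\ $(q-1)\wt(u_0)-\wt(u)\geq(q-1)w_{\min}$. Substituting the crude bounds $\wt(u_0)\leq w_{\max}$ and $\wt(u)\geq w_{\min}$ on the left gives $(q-1)w_{\max}-w_{\min}\geq(q-1)w_{\min}$, which rearranges to $q\,w_{\min}\leq(q-1)w_{\max}$, contradicting the hypothesis $w_{\min}/w_{\max}>(q-1)/q$.

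The argument is essentially a single counting identity, so no step is genuinely deep; the point requiring the most care is the coordinate classification in the first computation, where one must use the covering inclusion $\operatorname{Supp}(u)\subseteq\operatorname{Supp}(u_0)$ to ensure that every position in $\operatorname{Supp}(u)$ also lies in $\operatorname{Supp}(u_0)$, so that exactly one scalar annihilates each such coordinate and no position outside $\operatorname{Supp}(u_0)$ is spuriously nonzero. The only other subtlety is confirming that all $q-1$ codewords $u_0-xu$ with $x\neq0$ are nonzero, which is precisely where the non-proportionality of $u$ and $u_0$ enters.
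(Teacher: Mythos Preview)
Your proof is correct and is essentially the standard argument due to Ashikhmin and Barg. Note, however, that the paper does not supply its own proof of this lemma: it is quoted as a known result with a citation to \cite{ashikhmin1998minimal}, so there is no in-paper argument to compare against. What you have written is the classical counting proof---summing $\wt(u_0-xu)$ over $x\in\mathbb{F}_q$ coordinatewise and then bounding the same sum below using non-proportionality---and every step is sound, including the two points you flag as needing care.
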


Next we introduce subfield codes and their properties. Given an $[l,k]$ linear code over $\mathbb{F}_{q^n}$ generated by $G$, let $\{v_1,v_2,\cdots,v_n\}$ be a basis
of $\mathbb{F}_{q^n}$ over $\mathbb{F}_q$. For each entry $g_{ij}$ of $G$, $g_{ij}$ has the unique representation
$g_{ij}=g_{ij}^{(1)}v_1+g_{ij}^{(2)}v_2+\cdots+g_{ij}^{(n)}v_n$,
where $g_{ij}^{(k)}\in \mathbb{F}_q$ for $1\leq k\leq n$. Write
\begin{equation*}
    \begin{aligned}
        G_{ij}&=\left(\begin{matrix}
            g_{ij}^{(1)}\\
            g_{ij}^{(2)}\\
            \vdots\\
            g_{ij}^{(n)}
        \end{matrix}\right).
    \end{aligned}
\end{equation*}
Then the subfield code $C^{(q)}$ over $\mathbb{F}_q$ is the code generated by the matrix which is obtained by replacing each enry $g_{ij}$ of $G$ by $G_{ij}$. Ding 
and Heng in \cite{ding2019subfield} proved that the subfield code $C^{(q)}$ is independent of the choice of the basis of $\mathbb{F}_{q^n}$.

Now set $q=2$. Let $f(x)=x^n+a_{n-1}x^{n-1}+\cdots+a_1x+a_0\in \mathbb{F}_2[x]$ be an irreducible polynomial and $\omega$ a root of $f(x)$. Then
$\mathbb{F}_{2^n}=\mathbb{F}_2(\omega)$ and $\{1,\omega,\cdots,\omega^{n-1}\}$ is a basis of $\mathbb{F}_{2^n}$ over $\mathbb{F}_2$.
For $G\in M_{k\times l}(\mathbb{F}_{2^n})$, $G$ can be uniquely written as $G=G_0+\omega G_1+\cdots+\omega^{n-1}G_{n-1}$ where $G_i\in M_{k\times l}(\mathbb{F}_2)$.
For each entry $g_{ij}$ of $G$, let $g_{ij}=g_{ij}^{(0)}+\omega g_{ij}^{(1)}+\cdots+\omega^{n-1}g_{ij}^{(n-1)}$ where $g_{ij}^{(k)}\in G_k$ for $k=0,1,\cdots,n-1$.
Now we give the following result.
\begin{theorem}\label{subfield}
    Suppose $\mathbb{F}_{2^n}=\mathbb{F}_2(\omega)$, where $\omega$ is a root of an irreducible polynomial of degree $n$ over $\mathbb{F}_2$. Let $C$ be an $[l,k]$
    linear code over $\mathbb{F}_{2^n}$ with generator matrix $G=G_0+\omega G_1+\cdots+\omega^{n-1}G_{n-1}$, where $G_i\in M_{k\times l}(\mathbb{F}_2)$. Then the
    subfield code $C^{(2)}$ with respect to $C$ is generated by
    \begin{equation*}
        G^{(2)}=\left(\begin{matrix}
            G_0\\
            G_1\\
            \vdots\\
            G_{n-1}\\
        \end{matrix}\right).
    \end{equation*}
    In addition, if $C$ has the defining set $D=D_0+\omega D_1+\cdots+\omega^{n-1}D_{n-1}$ with $D_i\subseteq \mathbb{F}_2^m$ for $0\leq i\leq n-1$, then $C^{(2)}$ has
    defining set
    \begin{equation*}
        D^{(2)}=\{(d_0,d_1,\cdots,d_{n-1}):d_i\in D_i,0\leq i\leq n-1\}.
    \end{equation*}
\end{theorem}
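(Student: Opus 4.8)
The plan is to prove the two assertions in turn, reducing both to one bookkeeping identity between the matrix produced by the subfield construction and the block matrix $G^{(2)}$. Throughout I fix the basis $\{1,\omega,\ldots,\omega^{n-1}\}$ used to define $C^{(2)}$. Reading the decomposition $G=G_0+\omega G_1+\cdots+\omega^{n-1}G_{n-1}$ entrywise shows that the $\omega^{k}$-component of the entry $g_{ij}$ is exactly $(G_k)_{ij}$, so the height-$n$ column $G_{ij}$ attached to $g_{ij}$ by the subfield construction is $\big((G_0)_{ij},(G_1)_{ij},\ldots,(G_{n-1})_{ij}\big)^{T}$. This single observation is what links the construction to the $G_i$.

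For the first assertion I would write out explicitly the $(kn)\times l$ matrix $M$ obtained by replacing each scalar entry $g_{ij}$ of $G$ with the column $G_{ij}$ just described. Tracking indices, the $n$ rows of $M$ arising from the $i$-th row of $G$ are precisely the $i$-th rows of $G_0,G_1,\ldots,G_{n-1}$, in that order; hence $M$ lists all rows of $G_0,\ldots,G_{n-1}$ grouped first by the source-row index $i$, whereas $G^{(2)}$ lists the very same rows grouped first by the block index $t$. The two matrices therefore consist of the same rows in a different order, i.e.\ they differ by a permutation of rows. Since a row permutation is an invertible $\mathbb{F}_2$-row operation, $M$ and $G^{(2)}$ have the same row space and generate the same code, which is $C^{(2)}$ by definition.

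For the second assertion I would translate the defining-set language into generator matrices and reuse the first part. By definition of $C_D$, a generator matrix of $C=C_D$ is the $m\times l$ matrix $G$ over $\mathbb{F}_{2^n}$ whose columns are the ordered elements of $D$; writing each $d=d_0+\omega d_1+\cdots+\omega^{n-1}d_{n-1}$ with $d_i\in\mathbb{F}_2^m$ identifies the columns of $G_i$ with the components $d_i$, that is, with the elements of $D_i$. The map $d\mapsto(d_0,d_1,\ldots,d_{n-1})$ is a bijection from $D$ onto $D^{(2)}$ by uniqueness of the $\{1,\omega,\ldots,\omega^{n-1}\}$-expansion, so $D^{(2)}$ has $|D|=l$ distinct elements and is a legitimate defining set. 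Ordering $D^{(2)}$ compatibly with $D$, the generator matrix of $C_{D^{(2)}}$ is the $(mn)\times l$ binary matrix whose $j$-th column stacks the $j$-th columns of $G_0,\ldots,G_{n-1}$, and this matrix is exactly $G^{(2)}$. Since $G^{(2)}$ generates $C^{(2)}$ by the first part, we conclude $C^{(2)}=C_{D^{(2)}}$.

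The computations involved are routine, so I expect the only genuine obstacle to be notational: the delicate point is pinning down the two ordering conventions—row-major versus column-major in the expansion of $M$, together with the ordering of $D^{(2)}$ compatible with $D$—precisely enough that the \emph{same rows} and \emph{same columns} claims are exact rather than merely plausible. Making that index chase fully explicit, and invoking uniqueness of the basis expansion to secure the bijection $D\to D^{(2)}$, is what converts the picture into a proof.
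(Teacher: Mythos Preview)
Your proposal is correct and follows essentially the same approach as the paper: identify the column $G_{ij}$ with $\big((G_0)_{ij},\ldots,(G_{n-1})_{ij}\big)^T$, observe that the resulting $(kn)\times l$ matrix differs from $G^{(2)}$ by a row permutation, and deduce the defining-set statement as a direct consequence. The paper's proof is terser---it says only ``reset the rows'' and calls the second part ``straightforward''---so your explicit index bookkeeping and the bijection $D\to D^{(2)}$ are more detailed than what the paper provides, but the underlying argument is the same.
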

\begin{proof}
    Let
    \begin{equation*}
        G_{ij}=\left(\begin{matrix}
            g_{ij}^{(0)}\\
            g_{ij}^{(1)}\\
            \vdots\\
            g_{ij}^{(n-1)}\\
        \end{matrix}\right).
    \end{equation*}
    Then the subfield code $C^{(2)}$ is generated by
    \begin{equation*}
        \left(\begin{matrix}
            G_{11}&G_{12}&\cdots&G_{1l}\\
            G_{21}&G_{22}&\cdots&G_{2l}\\
            \vdots&\vdots&\ddots&\vdots\\
            G_{k1}&G_{k2}&\cdots&G_{kl}\\
        \end{matrix}\right).
    \end{equation*}
    Reset the rows of the above matrix we can get $C^{(2)}$ is generated by
    \begin{equation*}
        G^{(2)}=\left(\begin{matrix}
            G_0\\
            G_1\\
            \vdots\\
            G_{n-1}\\
        \end{matrix}\right).
    \end{equation*}
    The rest part of this theorem is a straightforward result from the above conclusion.
\end{proof}

\subsection{Simplicial complex}
A subset $\Delta\subseteq \mathbb{F}_2^m$ is called a $simplicial~complex$ if $v\in \Delta$ and $v$ covers $u$ implies $u\in \Delta$.
An element $v_0$ in a simplicial complex $\Delta$ is called maximal if $v\in \Delta$, $v$ covers $v_0$ implies $v=v_0$.
It is easy to see that a simplicial complex $\Delta$ is uniquely determined by its maximal elements. For the simplest case, if a simplicial complex has only one
maximal element with support  $L\subseteq [m]$, we say the simplicial complex is generated by $L$, denoted by $\Delta_L$. In this case, $\Delta_L$ is just a subspace of
$\mathbb{F}_2^m$ spanned by $\{\varepsilon_i: i\in L\}$, where $\varepsilon_i$ is the $n$-dimensional row vector whose $i$th entry is $1$ and $0$ otherwise.
\begin{example}{\rm
    Let
    $$\Delta_1=\{(0,0,0),(1,0,0),(0,1,0),(0,0,1),(1,1,0),(0,1,1)\}\subseteq \mathbb{F}_2^3.$$ Then $\Delta_1$ is a simplicial complex with maximal elements $(1,1,0)$
    and
    $(0,1,1)$. \\
    Let $$\Delta_2=\{(0,0,0),(1,0,0),(0,0,1),(1,0,1)\}\subseteq \mathbb{F}_2^3.$$ Then $\Delta_2$ is a simplical complex with maximal
    element $(1,0,1)$. Furthermore, $\Delta_2$ is also a linear space over $\mathbb{F}_2$ generated by $\{(1,0,0),(0,0,1)\}$.}
\end{example}
For a set $S$, denote the power set of $S$ by $2^S$.

For $x\in\mathbb{F}_2^m$, define $\chi_x:2^{\mathbb{F}_2^m}\to \mathbb{Z}$ by
\begin{equation*}
    \chi_x(P)=\sum_{y\in P}(-1)^{x\cdot y},
\end{equation*}
If $V$ is a subspace of $\mathbb{F}_2^n$, then we have the following result:
\begin{lemma} \label{lem1}
    Let $V$ be a linear subspace of $\mathbb{F}_2^m$, then we have
    \begin{equation*}
        \chi_x(V)=\left\{\begin{array}{cl}
            |V|, &x\in V^\perp;\\
            0, &x\notin V^\perp.
        \end{array}\right.
    \end{equation*}
\end{lemma}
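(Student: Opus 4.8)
The plan is to read the inner expression $y\mapsto(-1)^{x\cdot y}$ as an additive (Walsh) character of the abelian group $(V,+)$ and to run the standard orthogonality argument, splitting into the two cases dictated by whether or not $x$ annihilates $V$. Since $\chi_x(V)$ is by definition a sum of signs indexed by $V$, everything reduces to understanding the sign pattern of $x\cdot y$ as $y$ ranges over $V$.

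First I would dispose of the case $x\in V^\perp$. By the definition of the dual space, $x\cdot y=0$ for every $y\in V$, so each summand equals $(-1)^0=1$, and the sum collapses to $\sum_{y\in V}1=|V|$. This yields the first branch immediately, with no further work.

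For the case $x\notin V^\perp$ I would use the translation trick. Since $x\notin V^\perp$, there exists $y_0\in V$ with $x\cdot y_0=1$ in $\mathbb{F}_2$. Because $V$ is a subspace and $y_0\in V$, the map $y\mapsto y+y_0$ is a bijection of $V$ onto itself, so re-indexing the sum along this bijection and using bilinearity of the dot product gives
$$\chi_x(V)=\sum_{y\in V}(-1)^{x\cdot y}=\sum_{y\in V}(-1)^{x\cdot(y+y_0)}=(-1)^{x\cdot y_0}\chi_x(V)=-\chi_x(V),$$
whence $2\chi_x(V)=0$ and therefore $\chi_x(V)=0$, which is the second branch.

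The computation is entirely routine, so I do not expect a serious obstacle; the argument is really just orthogonality of characters on a finite abelian group. The only point that requires a little care is that we are working in characteristic $2$, so ``$x\cdot y_0\neq 0$'' means precisely ``$x\cdot y_0=1$'', and it is exactly this that makes the factor $(-1)^{x\cdot y_0}$ equal to $-1$ and forces the sum to vanish. A pleasant feature of the translation argument is that it is uniform and avoids choosing a basis or a complement for $V$, keeping the proof short.
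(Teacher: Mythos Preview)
Your proof is correct and follows essentially the same approach as the paper. Both arguments handle $x\in V^\perp$ identically, and for $x\notin V^\perp$ both pick $y_0\in V$ with $x\cdot y_0=1$ and exploit the translation $y\mapsto y+y_0$; the paper phrases this as showing the sets $A=\{y\in V:x\cdot y=0\}$ and $B=\{y\in V:x\cdot y=1\}$ have equal size, whereas you package the same bijection as a re-indexing of the sum to obtain $\chi_x(V)=-\chi_x(V)$, which is a slightly slicker but equivalent formulation.
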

\begin{proof}
    If $x\in V^\perp$, then $x\cdot y=0$ for all $y\in V$ and $(-1)^{x\cdot y}=1$ for all $y\in V$, thus $\chi_x(V)=|V|$. If $x\notin V^\perp$, then there exsists
    $y_0\in V$ such that $x\cdot y_0=1$. Let
    \begin{equation}
        \begin{aligned}
            &A=\{y\in V:x\cdot y=0\},\\
            &B=\{y\in V:x\cdot y=1\}.
        \end{aligned}
    \end{equation}
    Then both $A$ and $B$ are nonempty. Note that $y_0+B=\{y_0+y:y\in B\}\subseteq A$ and $x_0+A=\{x_0+y:y\in A\}\subseteq B$, thus $|y_0+B|=|B|\leq |A|$ and
    $|x_0+A|=|A|\leq |B|$, hence $|A|=|B|$. Therefore, $\chi_x(V)=\chi_x(A)+\chi_x(B)=|A|-|B|=0$.
\end{proof}

\begin{corollary}\label{chi}
    Let $L\subseteq [m]$, where $m\in \mathbb{N}$. Let $L^c=[m]\backslash L$, then
    \begin{equation*}
            \chi_x(\Delta_{L})=\left\{\begin{array}{cll}
                2^{|L|}&,&x\in\Delta_{L^c};\\
                0&,&x\notin\Delta_{L^c}.
            \end{array}\right.\\
    \end{equation*}
\end{corollary}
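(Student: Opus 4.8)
The plan is to recognize that this corollary is an immediate specialization of Lemma~\ref{lem1}, so the entire argument reduces to identifying the relevant subspace, its cardinality, and its orthogonal complement. The key observation, already noted in the text preceding the statement, is that the simplicial complex $\Delta_L$ generated by a single maximal element with support $L$ is precisely the linear subspace of $\mathbb{F}_2^m$ spanned by $\{\varepsilon_i : i \in L\}$. Since $\Delta_L$ is therefore a genuine linear subspace, Lemma~\ref{lem1} applies directly with $V = \Delta_L$, and it remains only to evaluate the two quantities appearing in that lemma.

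First I would compute the cardinality. Because the generators $\{\varepsilon_i : i \in L\}$ are $|L|$ linearly independent vectors in $\mathbb{F}_2^m$, the subspace $\Delta_L$ has dimension $|L|$ over $\mathbb{F}_2$, so $|\Delta_L| = 2^{|L|}$. This accounts for the value $2^{|L|}$ in the nonzero case of the conclusion.

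The second and central step is to verify the identification $\Delta_L^\perp = \Delta_{L^c}$. For this I would take an arbitrary $x = (x_1, \ldots, x_m) \in \mathbb{F}_2^m$ and observe that $x \in \Delta_L^\perp$ if and only if $x \cdot \varepsilon_i = 0$ for every generator, i.e. $x_i = 0$ for all $i \in L$. This condition says exactly that $\operatorname{Supp}(x) \subseteq L^c$, which is the defining condition for $x \in \Delta_{L^c}$. Hence the two sets coincide, and substituting $V^\perp = \Delta_{L^c}$ and $|V| = 2^{|L|}$ into the conclusion of Lemma~\ref{lem1} yields the stated formula. I do not anticipate a genuine obstacle here; the only point requiring a moment's care is the orthogonal-complement computation, since one must note that $\Delta_L$ is a coordinate subspace so that its perp is again a coordinate subspace, namely the one supported on the complementary index set $L^c$.
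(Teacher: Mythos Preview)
Your proposal is correct and follows exactly the paper's own approach: identify $\Delta_L$ as the coordinate subspace spanned by $\{\varepsilon_i : i \in L\}$, compute $|\Delta_L| = 2^{|L|}$ and $\Delta_L^\perp = \Delta_{L^c}$, and then invoke Lemma~\ref{lem1}. The only difference is that you spell out the orthogonal-complement calculation in detail, whereas the paper simply asserts $\Delta_L^\perp = \Delta_{L^c}$ without justification.
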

\begin{proof}
    Note that $\Delta_L$ is a subspace of $\mathbb{F}_2^m$ spanned by $\{\varepsilon_i: i\in L\}$, then $|\Delta_L|=2^{|L|}$ and $\Delta_L^\perp=\Delta_{L^c}$. Thus
    from Lemma \ref{lem1} we get the conclusion immediately.
\end{proof}
\subsection{The LFSR sequences}
In this subsection, we use the notation $\mathbb{F}=\mathbb{F}_q$. We will introduce some basic concepts about linear feedback shift register(LFSR) sequences or linear
recursive sequences. All the following results are from \cite{golomb2005signal} and
more details about the LFSR sequences can also be found in \cite{golomb2005signal}.

Let $V(\mathbb{F})$ be the set of all infinite sequences with elements in $\mathbb{F}$; that is,
\begin{equation}
    V(\mathbb{F})=\{\mathbf{a}=(a_0,a_1,\cdots):a_i\in \mathbb{F}\}.
\end{equation}\\
Let
\begin{equation*}
    \begin{aligned}
        &\mathbf{a}=(a_0,a_1,a_2,\cdots),\\
        &\mathbf{b}=(b_0,b_1,b_2,\cdots)
    \end{aligned}
\end{equation*}
be two sequences in $V(\mathbb{F})$ and let $c\in V(\mathbb{F})$. We define the addition and the scalar multiplication on $V(\mathbb{F})$ as follows:
\begin{equation*}
    \begin{aligned}
        \mathbf{a}+\mathbf{b}&=(a_0+b_0,a_1+b_1,a_2+b_2,\cdots),\\
        c\mathbf{a}&=(ca_0,ca_1,ca_2,\cdots).
    \end{aligned}
\end{equation*}
It is easy to verify that $V(\mathbb{F})$ is a linear space over $\mathbb{F}$ under this two operations. For $\mathbf{a}=(a_0,a_1,a_2,\cdots)\in V(F)$, we define a
left shift operator $L$ as follows:
$$L(\mathbf{a})=(a_1,a_2,a_3,\cdots).$$
Generally, for any positive integer $i$, we have
\begin{equation*}
    L^i\mathbf{a}=(a_i,a_{i+1},a_{i+2},\cdots).
\end{equation*}
By convention, we write $L^0\mathbf{a}=I\mathbf{a}=\mathbf{a}$, where $I$ is the identity transformation on $V(\mathbb{F})$.
An LFSR sequence is a sequence
$$\mathbf{a}=(a_0,a_1,a_2,\cdots)$$
in $V(\mathbb{F})$ whose elements satisfies the linear recursive relation
\begin{equation} \label{recursive}
    a_{n+k}=\sum_{i=0}^{n-1}c_ia_{i+k}, k=0,1,\cdots,
\end{equation}
where $c_i\in \mathbb{F}$, $i=0,1,\cdots,n-1$. By using the left shift operation $L$, the formula \eqref{recursive} can be written as
$$L^n\mathbf{a}=\sum_{i=0}^{n-1}c_iL^i\mathbf{a},$$
or equivalently,
\begin{equation}\label{poly}
    \left(L^n-\sum_{i=0}^{n-1}c_iL^i\right)\mathbf{a}=\mathbf{0}.
\end{equation}
We wirte
\begin{equation}\label{fx}
        f(x)=x^n-(c_{n-1}x^{n-1}+\cdots+c_1x+c_0),
\end{equation}
then we have $f(L)=L^n-(c_{n-1}L^{n-1}+\cdots+c_1L+c_0I)$ and $f(L)\mathbf{a}=\mathbf{0}$.
From \eqref{poly}, the definition of LFSR sequences (or linear recursive sequences) is equivalent to the following definition.
\begin{definition}{\rm (\cite{golomb2005signal})}
    For any infinite sequence $\mathbf{a}$ in $V(\mathbb{F})$, if there exists a nonzero monic polynomial $f(x)\in \mathbb{F}[x]$ such that
    \begin{equation*}
        f(L)\mathbf{a}=\mathbf{0},
    \end{equation*}
    then $\mathbf{a}$ is called a linear recursive sequence, or equivalently, an LFSR sequence. The polynomial $f(x)$ is called the characteristic polynomial of
    $\mathbf{a}$ over $\mathbb{F}$.
\end{definition}
For any nonzero polynomial $f(x)\in \mathbb{F}[x]$, let $G(f)$ be the set consisting of all sequences in $V(\mathbb{F})$ with
\begin{equation*}
    f(L)\mathbf{a}=\mathbf{0}.
\end{equation*}
Since $f(L)$ is also a linear transformation, $G(f)$ is a subspace of $V(\mathbb{F})$.
For any LFSR sequence $\mathbf{a}=(a_0,a_1,a_2,\cdots)\in G(f)$, where $\deg f=n$, we call the succesive $n$ terms in $\mathbf{a}$ such as
$$(a_k,a_{k+1},\cdots,a_{k+(n-1)}),\quad k\geq0$$ a state, denoted by $s_k$. The state $s_0=(a_0,a_1,\cdots,a_{n-1})$ is called the initial state of $\mathbf{a}$.
\begin{theorem} \label{G(f)}{\rm (\cite{golomb2005signal})}
    Let $f(x)\in \mathbb{F}[x]$ be a monic polynomial of degree $n$. Then $G(f)$ is a linear space of dimension $n$. Hence it contains $q^n$ different sequences. In
    particular, if $q=2$, $G(f)$ contains $2^n$ different binary sequences.
\end{theorem}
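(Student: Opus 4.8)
The plan is to construct an explicit linear isomorphism between $G(f)$ and the coordinate space $\mathbb{F}^n$, realised by sending each sequence to its initial state. Write $f(x)=x^n-(c_{n-1}x^{n-1}+\cdots+c_1x+c_0)$ and recall that, reading off the $k$-th coordinate of the identity $f(L)\mathbf{a}=\mathbf{0}$, membership $\mathbf{a}=(a_0,a_1,\cdots)\in G(f)$ is equivalent to the recurrence $a_{n+k}=\sum_{i=0}^{n-1}c_ia_{i+k}$ holding for every $k\geq 0$. Since $f(L)$ is a linear transformation on $V(\mathbb{F})$ and $G(f)=\ker f(L)$, we already know $G(f)$ is a subspace; it remains only to compute its dimension.

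Define $\phi\colon G(f)\to\mathbb{F}^n$ by $\phi(\mathbf{a})=(a_0,a_1,\cdots,a_{n-1})=s_0$, the initial state. Linearity of $\phi$ is immediate from the definitions of addition and scalar multiplication on $V(\mathbb{F})$, since the first $n$ coordinates of $\mathbf{a}+\mathbf{b}$ and of $c\mathbf{a}$ are obtained coordinatewise. First I would check injectivity: if $\phi(\mathbf{a})=\mathbf{0}$, then $a_0=\cdots=a_{n-1}=0$, and feeding this into the recurrence gives $a_n=\sum_{i=0}^{n-1}c_ia_i=0$; an easy induction on $k$ then shows $a_{n+k}=0$ for all $k\geq 0$, so $\mathbf{a}=\mathbf{0}$ and $\phi$ is injective.

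For surjectivity I would argue that every prescribed initial state extends uniquely to a member of $G(f)$. Given an arbitrary vector $(s_0,\cdots,s_{n-1})\in\mathbb{F}^n$, set $a_i=s_i$ for $0\leq i\leq n-1$ and then define the remaining terms recursively by $a_{n+k}=\sum_{i=0}^{n-1}c_ia_{i+k}$ for $k\geq 0$. This produces a well-defined sequence $\mathbf{a}\in V(\mathbb{F})$ whose terms satisfy the recurrence for every $k$, which is exactly the statement $f(L)\mathbf{a}=\mathbf{0}$; hence $\mathbf{a}\in G(f)$ and $\phi(\mathbf{a})=(s_0,\cdots,s_{n-1})$. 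Therefore $\phi$ is a linear isomorphism, so $\dim_{\mathbb{F}}G(f)=n$ and $|G(f)|=q^n$, specialising to $2^n$ when $q=2$.

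The argument is essentially routine, so I do not expect a genuine obstacle; the only point demanding care is the translation between the operator identity $f(L)\mathbf{a}=\mathbf{0}$ and the scalar recurrence. One must note that $f(L)\mathbf{a}=\mathbf{0}$ is not a single equation but the assertion that every coordinate vanishes, i.e. that the recurrence holds for all shifts $k\geq 0$ simultaneously; keeping this in view is what makes the surjectivity step clean, since the recursive construction builds precisely such a sequence by fiat.
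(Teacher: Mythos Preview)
Your argument is correct and is the standard proof of this fact. Note, however, that the paper does not actually supply its own proof of this theorem: it simply cites the result from \cite{golomb2005signal} without argument, so there is nothing to compare against beyond observing that your initial-state isomorphism is precisely the textbook approach one would find in that reference.
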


For an LFSR sequence $\mathbf{a}$, we define
\begin{equation*}
    A(\mathbf{a})=\{f(x)\in \mathbb{F}[x]:f(L)\mathbf{a}=\mathbf{0}\}.
\end{equation*}

In \cite{golomb2005signal}, the authors proved $A(\mathbf{a})$ is an ideal of $\mathbb{F}[x]$, then from the basic ring theory, there exists an uniquely determined
monic polynomial $m(x)\in A(\mathbf{a})$ whose degree is the lowest in $A(\mathbf{a})$ satisfies the property: for
$f(x)\in \mathbb{F}[x]$, $f(L)\mathbf{a}=\mathbf{0}$ if and only if $m(x)|f(x)$. In other words, $\mathbf{a}\in G(f)$ if and only if $m(x)|f(x)$. This uniquely
determined polynomial $m(x)$ is called the minimal polynomial of $\mathbf{a}$ over $\mathbb{F}$.
\begin{corollary} \label{irr}
    If $f(x)$ is a monic irreducible polynomial in $\mathbb{F}[x]$, then $f(x)$ is the minimal polynomial of any nonzero sequence in $G(f)$.
\end{corollary}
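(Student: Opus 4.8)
The plan is to combine the divisibility characterization of the minimal polynomial, recalled just before the statement, with the irreducibility of $f$. First I would observe that, by the very definition of $G(f)$, every $\mathbf{a}\in G(f)$ satisfies $f(L)\mathbf{a}=\mathbf{0}$, so $f(x)\in A(\mathbf{a})$. Writing $m(x)$ for the minimal polynomial of $\mathbf{a}$ over $\mathbb{F}$, the stated property of $m(x)$ (namely $g(L)\mathbf{a}=\mathbf{0}$ if and only if $m(x)\mid g(x)$) applied to $g=f$ immediately yields $m(x)\mid f(x)$.

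Next I would exploit irreducibility. Since $f$ is monic and irreducible in $\mathbb{F}[x]$, its only monic divisors are $1$ and $f$ itself, so the divisibility $m(x)\mid f(x)$ leaves only the two possibilities $m(x)=1$ or $m(x)=f(x)$. The remaining task is to eliminate the degenerate case $m(x)=1$ using the hypothesis that $\mathbf{a}$ is nonzero.

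Finally, I would note that $m(x)\in A(\mathbf{a})$ by construction, so $m(L)\mathbf{a}=\mathbf{0}$. If $m(x)=1$ then $m(L)=I$, whence $\mathbf{a}=I\mathbf{a}=m(L)\mathbf{a}=\mathbf{0}$, contradicting $\mathbf{a}\neq\mathbf{0}$. Hence $m(x)=f(x)$, which is exactly the claim.

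The argument is essentially immediate once the minimal-polynomial machinery established above is invoked, so I do not expect any serious obstacle. The only points requiring a little care are the bookkeeping conventions: that the minimal polynomial is taken monic, so that $m(x)=1$ is the correct constant alternative, and that the nonvanishing of $\mathbf{a}$ is precisely what rules out this constant polynomial. With these in place the proof reduces to the single chain $f\in A(\mathbf{a})\Rightarrow m\mid f\Rightarrow m\in\{1,f\}\Rightarrow m=f$.
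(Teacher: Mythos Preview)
Your proof is correct and is exactly the standard argument the paper has in mind: the corollary is stated there without an explicit proof, as an immediate consequence of the divisibility characterization of the minimal polynomial given just before it, and your chain $f\in A(\mathbf{a})\Rightarrow m\mid f\Rightarrow m\in\{1,f\}\Rightarrow m=f$ is precisely that consequence spelled out.
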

Next we introduce the most important theorem for us about LFSR sequences. This result is a part of Theorem 8.51 in \cite{1996Finite}, here we change the notations in
this theorem.
\begin{theorem}\label{state}{\rm\cite[Theorem 8.51]{1996Finite}}
    Let $\mathbf{a}=(a_0,a_1,a_2,\cdots)$ be an LFSR sequence over $\mathbb{F}$ with minimal polynomial $m(x)$ of degree $n\geq 1$. Then the first $n$ succesive
    states
    \begin{equation*}
        s_0,s_{1},\cdots,s_{n-1}
    \end{equation*}
    are linearly independent over $\mathbb{F}$, where
    \begin{equation*}
        s_i=(a_i,a_{i+1},\cdots,a_{i+n-1}).
    \end{equation*}
\end{theorem}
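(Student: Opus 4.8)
The plan is to argue by contradiction, using the defining property of the minimal polynomial $m(x)$. Suppose the states $s_0,s_1,\ldots,s_{n-1}$ are linearly dependent over $\mathbb{F}$; then there exist scalars $\lambda_0,\ldots,\lambda_{n-1}\in\mathbb{F}$, not all zero, with $\sum_{i=0}^{n-1}\lambda_i s_i=\mathbf{0}$. Reading this relation coordinate by coordinate, and recalling $s_i=(a_i,a_{i+1},\ldots,a_{i+n-1})$, it says precisely that $\sum_{i=0}^{n-1}\lambda_i a_{i+j}=0$ for every $j\in\{0,1,\ldots,n-1\}$.

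First I would package the coefficients into the polynomial $g(x)=\sum_{i=0}^{n-1}\lambda_i x^i\in\mathbb{F}[x]$, which is nonzero of degree at most $n-1$. The coordinate identities above say exactly that the first $n$ entries of the sequence $g(L)\mathbf{a}=\sum_{i=0}^{n-1}\lambda_i L^i\mathbf{a}$ all vanish; equivalently, the initial state of $g(L)\mathbf{a}$ is zero.

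The key step is to upgrade ``the first $n$ terms vanish'' to ``the whole sequence vanishes.'' Since polynomials in $L$ commute, $m(L)\bigl(g(L)\mathbf{a}\bigr)=g(L)\bigl(m(L)\mathbf{a}\bigr)=g(L)\mathbf{0}=\mathbf{0}$, so $g(L)\mathbf{a}\in G(m)$. A sequence in $G(m)$ satisfies the order-$n$ recurrence with characteristic polynomial $m(x)$, hence is completely determined by its initial state; equivalently, the map $G(m)\to\mathbb{F}^n$ sending a sequence to its initial state is injective, its codomain having the same dimension $n$ as $G(m)$ by Theorem \ref{G(f)}. Because the initial state of $g(L)\mathbf{a}$ is zero, we conclude $g(L)\mathbf{a}=\mathbf{0}$.

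Finally, $g(L)\mathbf{a}=\mathbf{0}$ means $g(x)\in A(\mathbf{a})$, so by the defining property of the minimal polynomial $m(x)\mid g(x)$. But $g$ is nonzero with $\deg g\leq n-1<n=\deg m$, and any nonzero multiple of $m$ has degree at least $n$; this is a contradiction. Hence no nontrivial dependence exists and $s_0,\ldots,s_{n-1}$ are linearly independent. I expect the main obstacle to be justifying the upgrade in the third paragraph cleanly: one must use the commutativity of the operators $L^i$ together with the fact that membership in $G(m)$ pins a sequence down from its first $n$ coordinates, rather than merely invoking the vanishing of those first $n$ terms.
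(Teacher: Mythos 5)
Your proof is correct and complete. Note that the paper itself offers no proof of this statement: Theorem \ref{state} is quoted directly from \cite[Theorem 8.51]{1996Finite}, so there is nothing internal to compare against. Your argument --- packaging a hypothetical dependence relation into a nonzero polynomial $g$ of degree at most $n-1$, observing that $g(L)\mathbf{a}$ lies in $G(m)$ by commutativity of polynomials in $L$ and has zero initial state, hence vanishes identically because a monic degree-$n$ recurrence determines a sequence from its first $n$ terms, and then contradicting the minimality of $\deg m$ --- is exactly the standard proof found in the cited reference, and every step (including the ``upgrade'' you flag as the main obstacle) is correctly justified.
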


From Corollary \ref{irr} and Theorem \ref{state}, we have
\begin{corollary} \label{coro1}
    Let $f(x)$ be an irreducible polynomial of degree $n$ over $\mathbb{F}$ and $\mathbf{a}\in G(f)$. Then the first $n$ succesive states
    \begin{equation*}
        s_0,s_{1},\cdots,s_{n-1}
    \end{equation*}
    are linearly independent over $\mathbb{F}$.
\end{corollary}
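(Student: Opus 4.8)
The plan is to combine the two immediately preceding results, Corollary \ref{irr} and Theorem \ref{state}, almost verbatim: this corollary is a direct specialization obtained by identifying the minimal polynomial of $\mathbf{a}$ with $f(x)$. First I would fix the hypotheses cleanly. As in Corollary \ref{irr} I take $f(x)$ to be \emph{monic} irreducible (consistent with the convention that characteristic polynomials here are monic), and I would note that the statement is intended for a nonzero sequence $\mathbf{a}$. Indeed, if $\mathbf{a}=\mathbf{0}$ then every state $s_i$ is the zero vector, so the $s_i$ are trivially dependent; hence the meaningful content is for $\mathbf{a}\in G(f)\setminus\{\mathbf{0}\}$, which I would either assume explicitly or record as the standing hypothesis.

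The core step is then immediate. Since $f(x)$ is monic irreducible of degree $n$ and $\mathbf{a}\in G(f)$ is nonzero, Corollary \ref{irr} applies and tells us that $f(x)$ is precisely the minimal polynomial $m(x)$ of $\mathbf{a}$ over $\mathbb{F}$. In particular $\deg m(x)=\deg f(x)=n\geq 1$, so the degree hypothesis of Theorem \ref{state} is met. Invoking Theorem \ref{state} with $m(x)=f(x)$ now yields exactly the desired conclusion: the first $n$ successive states $s_0,s_1,\ldots,s_{n-1}$, where $s_i=(a_i,a_{i+1},\ldots,a_{i+n-1})$, are linearly independent over $\mathbb{F}$.

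There is essentially no computational obstacle here, since the heavy lifting is done inside Theorem \ref{state}. The only point demanding a word of care is the identification of $f$ with the minimal polynomial, and this is exactly where irreducibility is used: the minimal polynomial $m(x)$ must divide $f(x)$ and be monic of positive degree, but an irreducible $f$ has no monic proper divisors of degree between $1$ and $n-1$, so $m(x)=f(x)$ is forced. The implicit requirement that $\mathbf{a}$ be nonzero is the only gap I would flag in the statement as written, and it is harmless in the intended applications, where the sequences under consideration are generated by nonzero initial states.
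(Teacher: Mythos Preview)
Your argument is correct and matches the paper's approach exactly: the corollary is stated as an immediate consequence of Corollary~\ref{irr} and Theorem~\ref{state}, and you have spelled out precisely how those two results combine. Your observation that the statement tacitly requires $\mathbf{a}\neq\mathbf{0}$ is also apt, and the paper's later uses of this corollary indeed apply it only to nonzero sequences.
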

\section{The weight of codewords in $C_{D^*}$ and $C_{D^c}$}
Suppose that $\mathbb{F}_{2^n}=\mathbb{F}_2(\omega)$, where $\omega$ is a root of $f(x)$ and $f(x)=\sum_{i=0}^na_ix^i$ is an irreducible polynomial over
$\mathbb{F}_2$ of degree $n$. Let $m\in \mathbb{N}$ and let $D_i\subseteq \mathbb{F}_2^m$, $0\leq i \leq n-1$. Assume that
$D=D_0+\omega D_1+\dotsb +\omega^{n-1}D_{n-1} \subseteq\mathbb{F}_{2^n}^m$. Let $D^*=D\backslash\{0\}$. Define a linear code as follows:
$$C_{D^*}=\{(v\cdot d)_{d\in D^*}: v\in \mathbb{F}_{2^n}^m\}.$$
Observe that the map $c_{D^*}:\mathbb{F}_{2^n}^m \to C_{D^*}$ defined by $c_{D^*}(v)=(v\cdot d)_{d\in D^*}$ is a surjective linear
transformation. By definition, $\vert D^* \vert$ is the length of $C_{D^*}$. Assume that $v=\alpha_0+\omega \alpha_1+\cdots +\omega^{n-1}
\alpha_{n-1}\in\mathbb{F}_{2^n}^m$ and $d=d_0+\omega d_1+\cdots+\omega^{n-1}d_{n-1}\in D$, where $\alpha_i\in \mathbb{F}_2^m$ and $d_i
\in D_i$, $1\leq i\leq n-1$. Thus
\begin{equation} \label{vd}
    \begin{aligned}
        v\cdot d &=\alpha_0\cdot d_0\\
                 &+\omega(\alpha_1\cdot d_0+\alpha_0\cdot d_1)\\
                 &+\omega^2(\alpha_2\cdot d_0+\alpha_1\cdot d_1+\alpha_0\cdot d_2)\\
                 &+\cdots\\
                 &+\omega^{n-1}(\alpha_{n-1}\cdot d_0+\alpha_{n-2}\cdot d_1+\cdots+\alpha_0\cdot d_{n-1})\\
                 &+\omega^n(\alpha_{n-1}\cdot d_1+\alpha_{n-2}\cdot d_2+\cdots+\alpha_1\cdot d_{n-1})\\
                 &+\cdots\\
                 &+\omega^{2n-2}\alpha_{n-1}\cdot d_{n-1}\\
                 &=\sum_{k=0}^{2n-2}\omega^k\sum_{\substack{0\leq i,j\leq {n-1},\\i+j=k}}\alpha_i\cdot d_j\\
                 &=\sum_{k=0}^{2n-2}\mu_k\omega^k, \quad\text{where}~\mu_k=\sum_{\substack{0\leq i,j\leq {n-1},\\i+j=k}}\alpha_i\cdot d_j.
    \end{aligned}
\end{equation}
Note that $\omega^k$ has the unique expression
\begin{equation*}
    \omega^k=l_{k,0}+l_{k,1}\omega+l_{k,2}\omega^2+\cdots+l_{k,n-1}\omega^{n-1},\quad k\geq 0.
\end{equation*}
Thus $v\cdot d$ can be simplified into the form as follows:
$$v\cdot d=\eta_0+\omega \eta_1+\cdots +\omega^{n-1}\eta_{n-1},$$
where
$$\eta_i=\beta_{i,0}\cdot d_0+\beta_{i,1}\cdot d_1+\cdots+\beta_{i,n-1}\cdot d_{n-1}$$
and $\beta_{i,j}$ is a linear combination of $\alpha_0,\alpha_1,\cdots,\alpha_{n-1}$ with coefficients in $\mathbb{F}_2$.
Now we caculate $\wt(c_{D^*}(v))$ for $v\in \mathbb{F}_{2^n}^m$:
$$
\begin{aligned}
    \wt(c_{D^*}(v)) &=\wt((\eta_0+\omega \eta_1+\cdots +\omega^{n-1}\eta_{n-1})_{d\in D})\\
    &=\wt\left(\left(\sum_{j=0}^{n-1}\beta_{0,j}\cdot d_0,\omega \sum_{j=0}^{n-1}\beta_{1,j}\cdot d_1,\cdots,
    \omega^{n-1}\sum_{j=0}^{n-1}\beta_{n-1,j}\cdot d_{n-1} \right)_{d_i\in D_i}\right).
\end{aligned}$$
Note that if $u=u_0+\omega u_1+\cdots+\omega^{n-1}u_{n-1}\in \mathbb{F}_{2^n}$ with $u_i\in \mathbb{F}_2,0\leq i\leq n-1$,
then $u=0\Leftrightarrow u_i=0, 0\leq i\leq n-1$.\\
Hence,
\begin{equation} \label{wt}
\begin{aligned}
    \wt(c_{D^*}(v))&=|D|-\frac{1}{2^n}\sum_{d_0\in D_0}\sum_{d_1\in D_1}\cdots\sum_{d_{n-1}\in D_{n-1}}(1+(-1)^{\eta_0})\times
    (1+(-1)^{\eta_1})\times \cdots \\
    &\quad\quad\quad\quad\quad\quad\quad\quad\quad\quad\quad\quad\quad\quad\times(1+(-1)^{\eta_{n-1}})\\
    &=|D|-\frac{1}{2^n}\sum_{\substack{0\leq j_1<j_2<\cdots<j_k\leq n-1,\\0\leq k\leq n}}\sum_{d_0\in D_0}\sum_{d_1\in D_1}\cdots\sum_{d_{n-1}\in D_{n-1}}
    (-1)^{\eta_{j_1}+\eta_{j_2}+\cdots+\eta_{j_k}},
\end{aligned}
\end{equation}
where the sum $\eta_{j_1}+\eta_{j_2}+\cdots+\eta_{j_k}=0$ if $k=0$.

Let $V$ be the set of all elemets of the formal sum $$k_0\alpha_0+k_1\alpha_1+\dotsb +k_{n-1}\alpha_{n-1},~ k_i\in \mathbb{F}_2,~i=0,1\cdots,n-1.$$
Two elements $\sum_{i=0}^{n-1}k_i\alpha_i$ and $\sum_{i=0}^{n-1}k_i'\alpha_i$ in $V$ are equal if $k_i=k_i'$, $\forall~ 0\leq i\leq n-1$.
Define the addition and the scalar multiplication on $V$ as follows:
$$
    \begin{aligned}
        &\sum_{i=0}^{n-1}k_i\alpha_i+\sum_{i=0}^{n-1}k_i'\alpha_i=\sum_{i=0}^{n-1}(k_i+k_i')\alpha_i,\\
        &l\sum_{i=0}^{n-1}k_i\alpha_i=\sum_{i=0}^{n-1}(lk_i)\alpha_i,~\forall~ l\in \mathbb{F}_2.
    \end{aligned}
$$
It is easy to verify that $V$ becomes a linear space over $\mathbb{F}_2$ under the above two operations. We identify the element
$$0\alpha_0+\cdots+0\alpha_{i-1}+1\alpha_i+0\alpha_{i+1}+\cdots+0\alpha_{n-1}$$ with $\alpha_i$, then we can easily verify that $\{\alpha_0,\cdots,\alpha_{n-1}\}$
form a basis of $V$, thus $\dim V=n$.
\begin{remark}{\rm
    The definition of $V$ is different from the subspace of $\mathbb{F}_2^m$ spanned by $\{\alpha_0,\alpha_1,\cdots,\alpha_{n-1}\}$ with $\alpha_i\in\mathbb{F}_2^m$,
    since we just regard the $k_0\alpha_0+k_1\alpha_1+\cdots+k_{n-1}\alpha_{n-1}$ as the formal sum of $\alpha_0,\alpha_1,\dotsb,\alpha_{n-1}$ instead of elements of
    the usual linear combination of $\alpha_0,\alpha_1,\cdots,\alpha_{n-1}$ in $\mathbb{F}_2^m$. However, we can also regard the element in $V$ as an element in
    $\mathbb{F}_2^m$. In order to differentiate it, for $\mathtt{a}\in V$, we let $\tilde{\mathtt{a}}$ denote the element in $\mathbb{F}_2^m$ correspond to
    $\mathtt{a}$.}
\end{remark}

Assume that
\begin{equation*}
    \beta_{i,j}=t_{i,j,0}\alpha_0+t_{i,j,1}\alpha_1+\cdots+t_{i,j,n-1}\alpha_{n-1},
\end{equation*}
we associate each $\eta_i$ with a matrix
$$M_i=\left(\begin{matrix}
    t_{i,0,0} & t_{i,0,1} & \cdots & t_{i,0,n-1}\\
    t_{i,1,0} & t_{i,1,1} & \cdots & t_{i,1,n-1}\\
    \vdots    & \vdots    & \ddots & \vdots     \\
    t_{i,n-1,0} & t_{i,n-1,1} & \cdots & t_{i,n-1,n-1}
\end{matrix}\right)$$
such that the $n$ row vectors of $M_i$ are the coordinates of $\beta_{i,0},\beta_{i,1},\cdots,\beta_{i,{n-1}}$ under the basis
$\{\alpha_0,\cdots,\alpha_{n-1}\}$ respectively if we regard $\beta_{i,j}$ as an element in $V$. In the same manner, let $A_k=\sum_{i+j=k+2}E_{ij}$ be the associated
matrix with $\mu_k$, $0\leq k\leq 2n-2$, where $E_{ij}$ denotes the elementary matrix whose $(i,j)$ entry is $1$ and $0$ otherwise. Then we have the following result
from a trivial observation.
\begin{lemma}\label{M_i}
    Let the notations be the same as above, we have
    \begin{equation*}
        M_i=A_i+\sum_{j=n}^{2n-2}l_{j,i}A_j.
    \end{equation*}
    In particular, the first row vector of $M_i$ is $\varepsilon_{i+1}$, thus $\beta_{i,0}=\alpha_i$, $0\leq i\leq n-1$.
\end{lemma}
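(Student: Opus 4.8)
The plan is to track how the single scalar $v\cdot d\in\mathbb{F}_{2^n}$ breaks into its $n$ binary coordinates $\eta_0,\dots,\eta_{n-1}$, and then to read off each coordinate as a formal bilinear form in the $\alpha$'s and $d$'s whose coefficient matrix is, by construction, $M_i$. The starting point is the expansion \eqref{vd}, namely $v\cdot d=\sum_{k=0}^{2n-2}\mu_k\omega^k$ with $\mu_k=\sum_{i+j=k}\alpha_i\cdot d_j$ (sum over $0\le i,j\le n-1$). First I would substitute the reduction $\omega^k=\sum_{i=0}^{n-1}l_{k,i}\omega^i$ into this sum and collect the coefficient of each $\omega^i$, obtaining
\begin{equation*}
    \eta_i=\sum_{k=0}^{2n-2}l_{k,i}\mu_k.
\end{equation*}
Since $\omega^0,\dots,\omega^{n-1}$ are themselves basis vectors we have $l_{k,i}=\delta_{k,i}$ for $0\le k\le n-1$, so this splits as $\eta_i=\mu_i+\sum_{k=n}^{2n-2}l_{k,i}\mu_k$. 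This identity, expressing each coordinate $\eta_i$ as an explicit $\mathbb{F}_2$-linear combination of the $\mu_k$, is the heart of the matter.

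Next I would pass from forms to matrices. The assignment sending a formal bilinear expression $\sum_{s,j}c_{s,j}\,\alpha_s\cdot d_j$ to its coefficient matrix is $\mathbb{F}_2$-linear, and by definition $A_k$ is the matrix of $\mu_k$ while $M_i$ is the matrix of $\eta_i$; applying this linear map to the identity above gives $M_i=A_i+\sum_{k=n}^{2n-2}l_{k,i}A_k$ immediately, which is the asserted formula. To make this airtight I would verify it entrywise. The $(j,s)$ entry of $M_i$ is the coefficient $t_{i,j,s}$ of $\alpha_s$ in $\beta_{i,j}$, which is precisely the coefficient of $\alpha_s\cdot d_j$ in $\eta_i$. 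Since the coefficient of $\alpha_s\cdot d_j$ in $\mu_k$ equals $\delta_{s+j,k}$, and the $(j,s)$ entry of $A_k=\sum_{p+q=k+2}E_{pq}$ also equals $\delta_{s+j,k}$, both sides of the claimed identity have $(j,s)$ entry $\delta_{s+j,i}+\sum_{k=n}^{2n-2}l_{k,i}\delta_{s+j,k}$, and the identity follows.

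For the \emph{in particular} statement I would read the first row off the established formula. The first row of $A_k=\sum_{p+q=k+2}E_{pq}$ is nonzero only in the column $q$ with $1+q=k+2$, i.e.\ $q=k+1$, and this column index lies in $\{1,\dots,n\}$ exactly when $k\le n-1$. Hence for $k\ge n$ the first row of $A_k$ vanishes, while the first row of $A_i$ (with $0\le i\le n-1$) is $\varepsilon_{i+1}$. Therefore the first row of $M_i=A_i+\sum_{k=n}^{2n-2}l_{k,i}A_k$ equals $\varepsilon_{i+1}$, which are exactly the coordinates of $\alpha_i$, so $\beta_{i,0}=\alpha_i$.

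The only genuinely delicate point is the index bookkeeping: keeping the $0$-based exponent and coordinate indices consistent with the $1$-based matrix indices in $E_{pq}$, and observing that each $A_k$ is a symmetric anti-diagonal stripe, so that the row-versus-column convention used for $M_i$ (rows indexed by the $d$-index, columns by the $\alpha$-index) agrees with the one built into the $A_k$ without introducing a spurious transpose. Once the entry formula $[A_k]_{j,s}=\delta_{s+j,k}$ is pinned down, everything else is the routine linear-combination bookkeeping carried out above, which is why the statement can fairly be called a trivial observation.
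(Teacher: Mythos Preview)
Your argument is correct and follows essentially the same route as the paper: both derive $\eta_i=\mu_i+\sum_{j=n}^{2n-2}l_{j,i}\mu_j$ by reducing each $\omega^j$ with $j\ge n$ and then pass linearly to the matrices $M_i=A_i+\sum_{j=n}^{2n-2}l_{j,i}A_j$. Your write-up is simply more explicit, adding the entrywise check and a careful first-row computation that the paper leaves to the reader.
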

\begin{proof}
    For $j\geq n$, $\omega^j=l_{j,0}+l_{j,1}\omega+l_{j,2}\omega^2+\cdots+l_{j,n-1}\omega^{n-1}$. If $l_{ji}=1$ then $\mu_j$ contributes to $\eta_i$. Thus
    $\eta_i=\mu_i+\sum_{j=n}^{2n-2}l_{j,i}\mu_j$, hence $M_i=A_i+\sum_{j=n}^{2n-2}l_{j,i}A_j$.
\end{proof}

Let $V^n$=$\{(\mathtt{a}_0,\cdots,\mathtt{a}_{n-1}):\mathtt{a}_i\in V\}$, define the addition and the scalar multiplication on $V^n$ as follows:
$$\begin{aligned}
    &(\mathtt{a}_0,\cdots,\mathtt{a}_{n-1})+(\mathtt{a}_0',\cdots,\mathtt{a}'_{n-1})=(\mathtt{a}_0+\mathtt{a}'_0,\cdots,\mathtt{a}_{n-1}+\mathtt{a}'_{n-1}),\\
    &k(\mathtt{a}_0,\cdots,\mathtt{a}_{n-1})=(k\mathtt{a}_0,\cdots,k\mathtt{a}_{n-1}),~\forall~k\in \mathbb{F}_2.
\end{aligned}$$
Thus $V^n$ also becomes a linear space over $\mathbb{F}_2$ under the above two operations. Let $$\gamma_i=(\beta_{i,0},\beta_{i,1},\cdots,\beta_{i,n-1}).$$
Then $\gamma_0,\gamma_1,\cdots,\gamma_{n-1}$ are linearly independent in $V^n$ since $\beta_{i,0}=\alpha_i$ by Lemma \ref{M_i}.
Let $W$ be the linear subspace of $V^n$ spanned by $\{\gamma_0,\gamma_1,\cdots,\gamma_{n-1}\}$, note that there exsists a natural bijection from
$\{\eta_{j_1}+\eta_{j_2}+\cdots+\eta_{j_k}:0\leq j_1<j_2<\cdots<j_k\leq n-1,0\leq k\leq n \}$ to $W$ defined by $\eta_{j_1}+\eta_{j_2}+\cdots+\eta_{j_k}\mapsto
\gamma_{j_1}+\gamma_{j_2}+\cdots+\gamma_{j_k}$. Recall that
\begin{equation*}
    \chi_x(P)=\sum_{y\in P}(-1)^{x\cdot y},
\end{equation*}
where $x\in \mathbb{F}_2^m$ and $P\subseteq \mathbb{F}_2^m$. Then from Equation \eqref{wt} we have
\begin{equation} \label{eq1}
    \begin{aligned}
        \wt(c_{D^*}(v))&=|D|-\frac{1}{2^n}\sum_{\substack{0\leq j_1<j_2<\cdots<j_k\leq n-1,\\0\leq k\leq n}}\sum_{d_0\in D_0}\sum_{d_1\in D_1}\cdots\sum_{d_{n-1}
        \in D_{n-1}}(-1)^{\eta_{j_1}+\eta_{j_2}+\cdots+\eta_{j_k}}\\
        &=|D|-\frac{1}{2^n}\sum_{w\in W}\sum_{d_0\in D_0}\sum_{d_1\in D_1}\cdots\sum_{d_{n-1}\in D_{n-1}}(-1)^{\tilde{w_0}\cdot d_0+\tilde{w_1}\cdot d_1+\cdots+
        \tilde{w_{n-1}}\cdot d_{n-1}}\\
        &=|D|-\frac{1}{2^n}\sum_{w\in W}\chi_{\tilde{w_0}}(D_0)\chi_{\tilde{w_1}}(D_1)\cdots \chi_{\tilde{w_{n-1}}}(D_{n-1}),
    \end{aligned}
\end{equation}
where $w=(w_0,w_1,\cdots,w_{n-1})$.

Next we consider the code $C_{D^c}$, where $D^c=\mathbb{F}_{2^n}^m\backslash D$. First we  write $D^c$ as follows:
\begin{equation}\label{eq2}
    \begin{aligned}
        D^{c}=&\left(D_{0}^{c}+\omega \mathbb{F}_{2}^{m}+\cdots+\omega^{n-1} \mathbb{F}_{2}^{m}\right) \bigsqcup \\
        &\left(D_{0}+\omega D_{1}^{c}+\omega^{2} \mathbb{F}_{2}^{m} \cdots+\omega^{n-1} \mathbb{F}_{2}^{m}\right) \bigsqcup \\
        & \vdots \\
        &\left(D_{0}+\omega D_{1}+\cdots+\omega^{n-2} D_{n-2}^{c}+\omega^{n-1} \mathbb{F}_{2}^{m}\right) \bigsqcup \\
        &\left(D_{0}+\omega D_{1}+\cdots+\omega^{n-2} D_{n-2}+\omega^{n-1} D_{n-1}^{c}\right),
    \end{aligned}
\end{equation}
where $\bigsqcup$ indicates disjoint union.

Note that $\chi_{x}(P)+\chi_{x}(P^c)=\chi_x(\mathbb{F}_2^m)=2^m\delta_{0,x}$ for
$x\in\mathbb{F}_2^m,P\subseteq\mathbb{F}_2^m$, where $\delta$ is the Kronecker delta function. Then by Equations \eqref{eq1} and \eqref{eq2}, we have
\begin{equation}\label{Dc}
    \begin{aligned}
        \wt(c_{D^c}(v))&=|D^c|-\frac{1}{2^n}\sum_{w\in W}\left(\chi_{\tilde{w_0}}(D_0^c)\chi_{\tilde{w_1}}(\mathbb{F}_2^m)\cdots \chi_{\tilde{w_{n-1}}}(\mathbb{F}_2^m)
        \right.\\
        &\quad+\chi_{\tilde{w_0}}(D_0)\chi_{\tilde{w_1}}(D_1^c)\chi_{\tilde{w_2}}(\mathbb{F}_2^m)\cdots \chi_{\tilde{w_{n-1}}}(\mathbb{F}_2^m)\\
        &\quad+\cdots\\
        &\quad+\chi_{\tilde{w_0}}(D_0)\chi_{\tilde{w_1}}(D_1)\cdots \chi_{\tilde{w_{n-2}}}(D_{n-2}^c)\chi_{\tilde{w_{n-1}}}(\mathbb{F}_2^m)\\
        &\quad+\left.\chi_{\tilde{w_0}}(D_0)\chi_{\tilde{w_1}}(D_1)\dotsb \chi_{\tilde{w_{n-2}}}(D_{n-2})\chi_{\tilde{w_{n-1}}}(D_{n-1}^{c})\right)\\
        &=|D^c|-\frac{1}{2^n}\sum_{w\in W}\left(2^{nm}\delta_{0,\tilde{w}}-\chi_{\tilde{w_0}}(D_0)\chi_{\tilde{w_1}}(D_1)\cdots \chi_{\tilde{w_{n-1}}}(D_{n-1})
        \right),
    \end{aligned}
\end{equation}
where $w=(w_0,w_1,\cdots,w_{n-1})$ and $\tilde{w}=(\tilde{w_0},\tilde{w_1},\cdots,\tilde{w_{n-1}})$ and $\delta_{0,\tilde{w}}$ is defined by
$\delta_{0,\tilde{w}}=\delta_{0,\tilde{w_0}}\delta_{0,\tilde{w_1}}\cdots\delta_{0,\tilde{w_{n-1}}}$.

In order to illustrate the relation between $\wt(c_{D^*}(v))$ and $\wt(c_{D^c}(v))$, the following result is important.
\begin{proposition}\label{prop1}
    Let $f(x)=x^n+a_{n-1}x^{n-1}+\cdots+a_1x+1$ be an irreducible polynomial over $\mathbb{F}_2$ and $\omega$ a root of $f(x)$. Assume that
    $$\omega^k=l_{k,n-1}\omega^{n-1}+l_{k,n-2}\omega^{n-2}+\cdots+l_{k,1}\omega+l_{k,0},\quad k\geq 0.$$ Let
    $$\mathbf{a}_i=(l_{0,i},l_{1,i},\cdots,l_{k,i},\cdots).$$ Then $\mathbf{a}_i$ is an LFSR sequence and $\mathbf{a}_i\in G(f)$ for $0\leq i\leq n-1$. Furthermore,
    $\{\mathbf{a}_0,\mathbf{a}_1,\cdots,\mathbf{a}_{n-1}\}$ is a basis of $G(f)$.
\end{proposition}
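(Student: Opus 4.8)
The plan is to verify the LFSR recursion directly from the defining relation of $\omega$, and then to settle the basis claim by reading off initial states. First I would record the key algebraic identity. Since $f(\omega)=0$ and we work over $\mathbb{F}_2$ (so $-1=1$), the relation $\omega^n = a_{n-1}\omega^{n-1}+\cdots+a_1\omega+1$ holds; writing $a_0:=1$ for the constant term this reads $\omega^n=\sum_{j=0}^{n-1}a_j\omega^j$. Multiplying both sides by $\omega^k$ gives, for every $k\geq 0$, the identity $\omega^{n+k}=\sum_{j=0}^{n-1}a_j\,\omega^{k+j}$ in $\mathbb{F}_{2^n}$.

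Next I would extract coordinates. Because $\{1,\omega,\dots,\omega^{n-1}\}$ is a basis of $\mathbb{F}_{2^n}$ over $\mathbb{F}_2$, the map $\pi_i$ sending an element to its $\omega^i$-coordinate is $\mathbb{F}_2$-linear, and $\pi_i(\omega^k)=l_{k,i}$ by definition. Applying $\pi_i$ to the identity above yields $l_{n+k,i}=\sum_{j=0}^{n-1}a_j\,l_{k+j,i}$ for all $k\geq 0$. Comparing this with the recursion \eqref{recursive} and the characteristic polynomial \eqref{fx} — noting that over $\mathbb{F}_2$ one has $f(x)=x^n-\sum_{j=0}^{n-1}a_jx^j$, so the feedback coefficients are $c_j=a_j$ — shows that this is precisely the defining recursion of $G(f)$. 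Hence each $\mathbf{a}_i$ satisfies $f(L)\mathbf{a}_i=\mathbf{0}$, i.e.\ $\mathbf{a}_i\in G(f)$, which proves the first assertion.

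For the basis claim I would combine a dimension count with the computation of the initial states. For $0\leq k\leq n-1$ the power $\omega^k$ is itself a basis vector, so $l_{k,i}=\delta_{k,i}$; consequently the initial state of $\mathbf{a}_i$ is $s_0(\mathbf{a}_i)=(l_{0,i},l_{1,i},\dots,l_{n-1,i})$, the standard basis vector of $\mathbb{F}_2^n$ having a $1$ in position $i$. Thus the $n$ initial states are linearly independent in $\mathbb{F}_2^n$. Since a sequence in $G(f)$ is completely determined by its first $n$ terms (the recursion propagates the initial state), the evaluation map $\mathbf{a}\mapsto s_0(\mathbf{a})$ is an injective linear map $G(f)\to\mathbb{F}_2^n$, and by Theorem \ref{G(f)} it is an isomorphism. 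Independence of the initial states therefore forces independence of $\mathbf{a}_0,\dots,\mathbf{a}_{n-1}$, and as $\dim G(f)=n$ these $n$ independent sequences form a basis.

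I do not anticipate a serious obstacle: the only point needing care is the bookkeeping with signs and with the constant term when matching the irreducible polynomial $f$ to the feedback polynomial \eqref{fx} of the shift register, but this collapses over $\mathbb{F}_2$. The argument is otherwise a direct translation of the field relation $f(\omega)=0$ into a coordinatewise linear recurrence, with the initial-state isomorphism doing the rest.
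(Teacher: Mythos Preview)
Your proof is correct, and it is genuinely different from the paper's argument in the key step. Both proofs handle the basis claim the same way---by observing that the initial state of $\mathbf{a}_i$ is the $i$th standard basis vector and invoking Theorem~\ref{G(f)}---but they diverge on how membership $\mathbf{a}_i\in G(f)$ is established.

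The paper multiplies $\omega^k$ by a single factor $\omega$ and reads off the one-step recursions $l_{k+1,i}=a_i l_{k,n-1}+l_{k,i-1}$ (with $l_{k+1,0}=l_{k,n-1}$), which couple the different coordinate sequences: in operator form, $L\mathbf{a}_i=a_i\mathbf{a}_{n-1}+\mathbf{a}_{i-1}$ and $L\mathbf{a}_0=\mathbf{a}_{n-1}$. It then iterates these relations $n$ times to deduce $f(L)\mathbf{a}_{n-1}=\mathbf{0}$, and finally uses the same relations together with the fact that $G(f)$ is a subspace to push membership from $\mathbf{a}_{n-1}$ to the remaining $\mathbf{a}_i$. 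By contrast, you multiply $\omega^n=\sum_j a_j\omega^j$ by $\omega^k$ once and for all, and then apply the $\mathbb{F}_2$-linear coordinate projection $\pi_i$ to obtain the full $n$-term recursion $l_{n+k,i}=\sum_j a_j l_{k+j,i}$ directly and uniformly in $i$. Your route is shorter and more conceptual; the paper's route, though more laborious, has the side effect of exhibiting the companion-matrix action (the coupled shift relations among the $\mathbf{a}_i$), which is not needed here but is structurally informative.
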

\begin{proof}
    Since $f(\omega)=0$, then $\omega^n=a_{n-1}\omega^{n-1}+a_{n-2}\omega^{n-2}+\cdots+a_1\omega+1$. For all $k\geq 0$, we have
    $$\begin{aligned}
        \omega^{k+1}&=\omega\cdot\omega^k\\
        &=\omega(l_{k,n-1}\omega^{n-1}+l_{k,n-2}\omega^{n-2}+\cdots+l_{k,1}\omega+l_{k,0})\\
        &=l_{k,n-1}(a_{n-1}\omega^{n-1}+a_{n-2}\omega^{n-2}+\cdots+a_1\omega+1)\\
        &\quad +l_{k,n-2}\omega^{n-1}+l_{k,n-3}\omega^{n-2}+\cdots+l_{k,1}\omega^2+l_{k,0}\omega.
    \end{aligned}
    $$
    Thus we get the following relation
    \begin{equation}\label{relation}
        \left\{
            \begin{aligned}
                l_{k+1,n-1}&=a_{n-1}l_{k,n-1}+l_{k,n-2}\\
                l_{k+1,n-2}&=a_{n-2}l_{k,n-1}+l_{k,n-3}\\
                \vdots &\quad \quad \quad \quad \vdots\\
                l_{k+1,1}&=a_1l_{k,n-1}+l_{k,0}\\
                l_{k+1,0}&=l_{k,n-1}.
            \end{aligned}
        \right.
    \end{equation}
    From  Relation \eqref{relation} we have
    \begin{equation}\label{L}
        \left\{
            \begin{aligned}
                L\mathbf{a}_{n-1}&=a_{n-1}\mathbf{a}_{n-1}+\mathbf{a}_{n-2}\\
                L\mathbf{a}_{n-2}&=a_{n-2}\mathbf{a}_{n-1}+\mathbf{a}_{n-3}\\
                \vdots &\quad \quad \quad \quad \vdots\\
                L\mathbf{a}_1&=a_1\mathbf{a}_{n-1}+\mathbf{a}_0\\
                L\mathbf{a}_0&=\mathbf{a}_{n-1}.
            \end{aligned}
        \right.
    \end{equation}
    Then
    \begin{equation*}
        \begin{aligned}
            L^n\mathbf{a}_{n-1}&=L^{n-1}(L\mathbf{a}_{n-1})\\
            &=L^{n-1}(a_{n-1}\mathbf{a}_{n-1}+\mathbf{a}_{n-2})\\
            &=a_{n-1}L^{n-1}\mathbf{a}_{n-1}+L^{n-2}(L\mathbf{a}_{n-2})\\
            &=a_{n-1}L^{n-1}\mathbf{a}_{n-1}+L^{n-2}(a_{n-2}\mathbf{a}_{n-1}+\mathbf{a}_{n-3})\\
            &=a_{n-1}L^{n-1}\mathbf{a}_{n-1}+a_{n-2}L^{n-2}\mathbf{a}_{n-1}+L^{n-3}(L\mathbf{a}_{n-3})\\
            &=\cdots\\
            &=a_{n-1}L^{n-1}\mathbf{a}_{n-1}+a_{n-2}L^{n-2}\mathbf{a}_{n-1}+\cdots+a_1L\mathbf{a}_{n-1}+L\mathbf{a}_0\\
            &=a_{n-1}L^{n-1}\mathbf{a}_{n-1}+a_{n-2}L^{n-2}\mathbf{a}_{n-1}+\cdots+a_1L\mathbf{a}_{n-1}+\mathbf{a}_{n-1}.
        \end{aligned}
    \end{equation*}
    It follows that $f(L)\mathbf{a}_{n-1}=\mathbf{0}$ and then $\mathbf{a}_{n-1}\in G(f)$.
    From Relation \eqref{L} and Theorem \ref{G(f)}, we conclude that $\mathbf{a}_i\in G(f)$ for $0\leq i\leq n-1$. Furthermore, note that the initial state of
    $\mathbf{a}_i$ is $\varepsilon_{i+1}$ for $0\leq i\leq n-1$, then $\mathbf{a}_0,\mathbf{a}_1,\cdots,\mathbf{a}_{n-1}$ are linearly independent over $\mathbb{F}_2$
    and therefore $\{\mathbf{a}_0,\mathbf{a}_1,\cdots,\mathbf{a}_{n-1}\}$ is a basis of $G(f)$.
\end{proof}

\begin{lemma}\label{M}
    Let $\mathcal{M}$ be the set of all $\mathbb{F}_2$-linear combination of $M_0,M_1,\cdots,M_{n-1}$. Then the following hold.\\
    (1). The $n$ row vectors of $M_i$ are the first $n$ succesive states of $\mathbf{a}_i$. \\
    (2). $M$ is invertible for all $M\in \mathcal{M}$.\\
    (3). For all nonzero $w=(w_0,w_1,\cdots,w_{n-1})\in W$, $\{w_0,w_1,\cdots,w_{n-1}\}$ is a basis of $V$.
\end{lemma}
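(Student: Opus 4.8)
\section*{Proof proposal}

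The plan is to prove the three assertions in order, because part~(1) sets up a dictionary between the matrices $M_i$ and the LFSR sequences $\mathbf{a}_i$ of Proposition~\ref{prop1}, after which parts~(2) and~(3) become immediate consequences of the linear independence of LFSR states recorded in Corollary~\ref{coro1}. For part~(1) I would start from the formula $M_i=A_i+\sum_{j=n}^{2n-2}l_{j,i}A_j$ of Lemma~\ref{M_i} and compute the entries directly. Indexing the rows and columns of these $n\times n$ matrices by $0,1,\dots,n-1$, the relation $A_k=\sum_{i+j=k+2}E_{ij}$ (in which $E_{ij}$ is $1$-indexed) translates, after the shift $p=i-1,\,q=j-1$, into the statement that the $(p,q)$ entry of $A_k$ is $1$ exactly when $p+q=k$. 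Hence the $(p,q)$ entry of $M_i$ equals $\delta_{p+q,i}+\sum_{j=n}^{2n-2}l_{j,i}\,\delta_{p+q,j}$, and this collapses to the single value $l_{p+q,i}$ once one observes that $l_{s,i}=\delta_{s,i}$ for $0\le s\le n-1$, since $\omega^{s}$ is then itself a basis vector. Consequently the $p$-th row of $M_i$ is $(l_{p,i},l_{p+1,i},\dots,l_{p+n-1,i})$, which is exactly the state $s_p$ of $\mathbf{a}_i=(l_{0,i},l_{1,i},\dots)$. This is the main computational step; the only subtlety is that the term $A_i$ and the high-degree terms $A_j$ with $j\ge n$ must knit together seamlessly along each anti-diagonal, and the identity $l_{s,i}=\delta_{s,i}$ for small $s$ is precisely what removes the apparent boundary case between $p+q\le n-1$ and $p+q\ge n$.

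For part~(2), write a general element as $M=\sum_{i=0}^{n-1}c_iM_i$ with $c_i\in\mathbb{F}_2$ not all zero. Since forming the state at a fixed position is an $\mathbb{F}_2$-linear operation on the sequence, part~(1) shows that the $p$-th row of $M$ is the state $s_p$ of $\mathbf{b}=\sum_{i=0}^{n-1}c_i\mathbf{a}_i$. By Proposition~\ref{prop1} the sequences $\mathbf{a}_0,\dots,\mathbf{a}_{n-1}$ form a basis of $G(f)$, so $\mathbf{b}$ is a \emph{nonzero} element of $G(f)$; as $f$ is irreducible of degree $n$, Corollary~\ref{coro1} guarantees that the first $n$ successive states $s_0,\dots,s_{n-1}$ of $\mathbf{b}$ are linearly independent over $\mathbb{F}_2$. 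These are exactly the $n$ rows of $M$, so $M$ is invertible.

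For part~(3), recall that a nonzero $w=(w_0,\dots,w_{n-1})\in W$ has the form $w=\sum_{i=0}^{n-1}c_i\gamma_i$ with $c_i$ not all zero, whence $w_q=\sum_{i=0}^{n-1}c_i\beta_{i,q}$ for each $q$. The coordinate vector of $\beta_{i,q}$ in the basis $\{\alpha_0,\dots,\alpha_{n-1}\}$ of $V$ is the $q$-th row of $M_i$, so the coordinate vector of $w_q$ is the $q$-th row of $M=\sum_{i=0}^{n-1}c_iM_i$. By part~(2) this $M$ is invertible, hence its rows $w_0,\dots,w_{n-1}$ are linearly independent in the $n$-dimensional space $V$ and therefore form a basis. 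In short, the whole argument hinges on part~(1): once the rows of the $M_i$ are identified with LFSR states, invertibility in~(2) and the basis property in~(3) follow directly from Corollary~\ref{coro1}.
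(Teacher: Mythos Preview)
Your proposal is correct and follows essentially the same route as the paper's proof: both rewrite $M_i=\sum_{j=0}^{2n-2}l_{j,i}A_j$ (using $l_{s,i}=\delta_{s,i}$ for $0\le s\le n-1$) to exhibit $M_i$ as the Hankel matrix of the first $n$ states of $\mathbf{a}_i$, then combine Proposition~\ref{prop1} with Corollary~\ref{coro1} to obtain invertibility for nonzero $M\in\mathcal{M}$ and hence the basis property in~(3). Your treatment of part~(1) is more explicit about the index shift and the boundary between $p+q\le n-1$ and $p+q\ge n$, but the argument is the same.
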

\begin{proof}
    (1). From Lemma \ref{M_i}, we have $M_i=A_i+\sum_{j=n}^{2n-2}l_{j,i}A_j=\sum_{j=0}^{2n-2}l_{j,i}A_j$. Then
    $$M_i=\left(
    \begin{matrix}
        l_{0,i}&l_{1,i}&\cdots&l_{n-1,i}\\
        l_{1,i}&l_{2,i}&\cdots&l_{n,i}\\
        \vdots&\vdots&\ddots&\vdots\\
        l_{n-1,i}&l_{n,i}&\cdots&l_{2n-2,i}
    \end{matrix}\right).
    $$
    Since $\mathbf{a}_i=(l_{0,i},l_{1,i},\cdots,l_{k,i},\cdots)$, we get the conclusion immediately.\\
    (2). For $M\in \mathcal{M}$, let $M=M_{i_i}+M_{i_2}+\cdots+M_{i_r}$ and $\mathbf{a}=\mathbf{a}_{i_1}+\mathbf{a}_{i_2}+\cdots+\mathbf{a}_{i_r}$. From Proposition
    \ref{prop1}, we have $\mathbf{a}\in G(f)$, thus the $n$ row vectors of $M$ are the first $n$ succesive states of $\mathbf{a}$ from (1). Then from Corollary
    \ref{coro1}, the $n$ row vectors of $M$ are linearly independent thus $M$ is invertible for all $M\in \mathcal{M}$.\\
    (3). Let $W^*=W\backslash\{0\}$. For all $w=(w_0,w_1,\cdots,w_{n-1})\in W^*$, assume $w=\gamma_{i_1}+\gamma_{i_2}+\cdots+\gamma_{i_r}$. Note that
    $M=M_{i_i}+M_{i_2}+\cdots+M_{i_r}$ is invertible, then $w_0,w_1,\cdots,w_{n-1}$ are linearly independent in $V$ over $\mathbb{F}_2$ since the $n$ row vectors of
    $M$ are the coordinates of $w_0,w_1,\cdots,w_{n-1}$ under the basis $\{\alpha_0,\alpha_1,\cdots,\alpha_{n-1}\}$ respectively, which implies $\{w_0,w_1,\cdots,w_{n-1}\}$
    is a basis of $V$ for all $w\in W^*$.\\
\end{proof}
\begin{lemma}\label{im}
    Let the notations be the same as above. Then
    \begin{equation*}
        \wt(c_{D^c}(v))+\wt(c_{D^*}(v))=(2^n-1)\times 2^{n(m-1)}(1-\delta_{0,v}).
    \end{equation*}
\end{lemma}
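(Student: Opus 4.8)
The plan is to add Equations \eqref{eq1} and \eqref{Dc} termwise and exploit the cancellation that has been engineered into the two expressions. Writing $\Pi(w)=\chi_{\tilde{w_0}}(D_0)\chi_{\tilde{w_1}}(D_1)\cdots\chi_{\tilde{w_{n-1}}}(D_{n-1})$ for the common product, Equation \eqref{eq1} contributes $|D|-\frac{1}{2^n}\sum_{w\in W}\Pi(w)$, while Equation \eqref{Dc} contributes $|D^c|-\frac{1}{2^n}\sum_{w\in W}\left(2^{nm}\delta_{0,\tilde w}-\Pi(w)\right)$. Upon summing, the two occurrences of $\Pi(w)$ cancel, and I am left with
$$\wt(c_{D^c}(v))+\wt(c_{D^*}(v))=|D|+|D^c|-\frac{2^{nm}}{2^n}\sum_{w\in W}\delta_{0,\tilde w}.$$
Since $|D|+|D^c|=|\mathbb{F}_{2^n}^m|=2^{nm}$, everything reduces to computing the integer $N_v:=\sum_{w\in W}\delta_{0,\tilde w}$, the number of $w\in W$ whose associated $\tilde w=(\tilde{w_0},\ldots,\tilde{w_{n-1}})$ is the zero vector of $(\mathbb{F}_2^m)^n$.

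The key step, and the one I expect to carry the real content, is the evaluation of $N_v$. I would split into two cases according to whether $v=0$. If $v=0$, then every $\alpha_i$ is the zero vector of $\mathbb{F}_2^m$, so $\tilde w=0$ for every $w\in W$ regardless of its coordinates, whence $N_0=|W|=2^n$ and the displayed sum equals $2^{nm}-2^{nm}=0$, matching the right-hand side since $1-\delta_{0,0}=0$.

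If $v\neq 0$, I claim $N_v=1$, the only contribution coming from $w=0$. Here is where Lemma \ref{M}(3) is indispensable: the assignment $\mathtt a\mapsto\tilde{\mathtt a}$ is the $\mathbb{F}_2$-linear map $V\to\mathbb{F}_2^m$ determined by $\alpha_i\mapsto\alpha_i$, and since $v\neq 0$ at least one $\alpha_i$ is nonzero, so this map is not identically zero. For any nonzero $w=(w_0,\ldots,w_{n-1})\in W$, Lemma \ref{M}(3) guarantees that $\{w_0,\ldots,w_{n-1}\}$ is a basis of $V$; if all $\tilde{w_i}$ vanished, the linear map $\tilde{\,\cdot\,}$ would vanish on a spanning set of $V$ and hence be identically zero, contradicting $v\neq 0$. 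Therefore no nonzero $w$ satisfies $\tilde w=0$, giving $N_v=1$.

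Substituting $N_v=1$ yields $2^{nm}-2^{nm-n}=2^{n(m-1)}(2^n-1)$, which is exactly $(2^n-1)\cdot 2^{n(m-1)}(1-\delta_{0,v})$ for $v\neq 0$. Combining the two cases completes the argument. The only genuinely delicate point is the contrapositive linking $\tilde w=0$ to $v=0$ through the basis property, so I would take care to invoke Lemma \ref{M}(3) precisely and to record explicitly that $\tilde{\,\cdot\,}$ is linear with $\tilde{\alpha_i}=\alpha_i$; the remainder is the bookkeeping of the cancellation and the two cardinality substitutions.
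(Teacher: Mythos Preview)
Your proposal is correct and follows essentially the same route as the paper: both arguments hinge on Lemma~\ref{M}(3) to establish that for nonzero $w\in W$ one has $\tilde w=0$ if and only if $v=0$, and then combine Equations~\eqref{eq1} and~\eqref{Dc}. The only cosmetic difference is the order of operations: you add the two weight formulas first and let the $\Pi(w)$ terms cancel immediately, whereas the paper simplifies each weight expression separately (splitting off the $w=0$ term and rewriting the $\delta_{0,\tilde w}$ sum as $(2^n-1)\delta_{0,v}$ over $W^*$) before adding; your ordering is arguably a bit cleaner since it avoids carrying $\sum_{w\in W^*}\Pi(w)$ through two parallel computations.
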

\begin{proof}
    From Lemma \ref{M}, $\{w_0,w_1,\cdots,w_{n-1}\}$ is a basis of $V$ for all nonzero $w=(w_0,w_1,\cdots,w_{n-1})\in W$. Thus for
    $v=\alpha_0+\omega\alpha_1+\cdots+\omega^{n-1}\alpha_{n-1}\in\mathbb{F}_{2^n}^m$, where $\alpha_i\in \mathbb{F}_2^m$, we have
    \begin{equation*}
        \begin{aligned}
            v=0&\Longleftrightarrow\alpha_0=\alpha_1=\cdots=\alpha_{n-1}=0~\text{in}~ \mathbb{F}_2^m\\
            &\Longleftrightarrow \tilde{w_0}=\tilde{w_1}=\cdots=\tilde{w_{n-1}}=0~\text{for}~w=(w_0,w_1,\cdots,w_{n-1})\in W^*,
        \end{aligned}
    \end{equation*}
    thus $\delta_{0,\tilde{w}}=\delta_{0,v}$ for all $w\in W^*$. Now from \eqref{Dc} we have
    \begin{equation*}
        \begin{aligned}
            \wt(c_{D^c}(v))&=|D^c|-\frac{1}{2^n}\sum_{w\in W}\left(2^{nm}\delta_{0,\tilde{w}}-\chi_{\tilde{w_0}}(D_0)\chi_{\tilde{w_1}}(D_1)\cdots
            \chi_{\tilde{w_{n-1}}}(D_{n-1})\right)\\
            &=\frac{2^n-1}{2^n}|D^c|-\frac{1}{2^n}\sum_{w\in W^*}\left(2^{nm}\delta_{0,\tilde{w}}-\chi_{\tilde{w_0}}(D_0)\chi_{\tilde{w_1}}(D_1)\cdots
            \chi_{{\tilde{w_{n-1}}}}(D_{n-1})\right)\\
            &=\frac{2^n-1}{2^n}|D^c|-2^{n(m-1)}\delta_{0,v}+\frac{1}{2^n}\sum_{w\in W^*}\chi_{\tilde{w_0}}(D_0)\chi_{\tilde{w_1}}(D_1)\cdots
            \chi_{\tilde{w_{n-1}}}(D_{n-1}).
        \end{aligned}
    \end{equation*}
    On the other hand, from \eqref{eq1}, we have
    \begin{equation*}
        \begin{aligned}
            \wt(c_{D^*}(v))&=|D|-\frac{1}{2^n}\sum_{w\in W}\chi_{\tilde{w_0}}(D_0)\chi_{\tilde{w_1}}(D_1)\cdots \chi_{\tilde{w_{n-1}}}(D_{n-1})\\
            &=\frac{2^n-1}{2^n}|D|-\frac{1}{2^n}\sum_{w\in W^*}\chi_{\tilde{w_0}}(D_0)\chi_{\tilde{w_1}}(D_1)\cdots \chi_{\tilde{w_{n-1}}}(D_{n-1}).
        \end{aligned}
    \end{equation*}
    Thus
    \begin{equation*}
            \wt(c_{D^c}(v))+\wt(c_{D^*}(v))=(2^n-1)\times 2^{n(m-1)}(1-\delta_{0,v}).
    \end{equation*}
\end{proof}
We give an example to illustrate the above results in this section.
\begin{example} \label{ex1}
    {\rm Let $\mathbb{F}_8=\mathbb{F}_2(\omega)$, where $\omega$ is a root of $f(x)=x^3+x+1$. Then
    \begin{equation*}
        \begin{aligned}
            \omega^3&=\omega+1,\\
            \omega^4&=\omega^2+\omega,\\
            \omega^5&=\omega^2+\omega+1,\\
            \omega^6&=\omega^2+1,\\
            \omega^7&=1.
        \end{aligned}
    \end{equation*}
    Let $D=D_0+\omega D_1+\omega^2 D_2\subseteq \mathbb{F}_{2^3}^m$ and $v=\alpha_0+\omega\alpha_1+\omega^2\alpha_2$, $d=d_0+\omega d_1+\omega^2 d_2$ where
    $d_i\in D_i$. Then
    \begin{equation*}
        \begin{aligned}
            v\cdot d &=\alpha_0\cdot d_0\\
                     &\quad+\omega(\alpha_1\cdot d_0+\alpha_0\cdot d_1)\\
                     &\quad+\omega^2(\alpha_2\cdot d_0+\alpha_1\cdot d_1+\alpha_0\cdot d_2)\\
                     &\quad+(\omega+1)(\alpha_2\cdot d_1+\alpha_1\cdot d_2)\\
                     &\quad+(\omega^2+\omega)\alpha_2\cdot d_2\\
                     &=\alpha_0\cdot d_0+\alpha_2\cdot d_1+\alpha_1\cdot d_2\\
                     &\quad+\omega(\alpha_1\cdot d_0+(\alpha_0+\alpha_2)\cdot d_1+(\alpha_1+\alpha_2)\cdot d_2)\\
                     &\quad+\omega^2(\alpha_2\cdot d_0+\alpha_1\cdot d_1+(\alpha_0+\alpha_2)\cdot d_2).
        \end{aligned}
    \end{equation*}
    Thus
    \begin{equation*}
        \begin{aligned}
            \eta_0&=\alpha_0\cdot d_0+\alpha_2\cdot d_1+\alpha_1\cdot d_2,\\
            \eta_1&=\alpha_1\cdot d_0+(\alpha_0+\alpha_2)\cdot d_1+(\alpha_1+\alpha_2)\cdot d_2,\\
            \eta_2&=\alpha_2\cdot d_0+\alpha_1\cdot d_1+(\alpha_0+\alpha_2)\cdot d_2.
        \end{aligned}
    \end{equation*}
    Then we have
    $$
    M_0=\left(\begin{matrix}
        1&0&0\\
        0&0&1\\
        0&1&0
    \end{matrix}\right),\quad
    M_1=\left(\begin{matrix}
        0&1&0\\
        1&0&1\\
        0&1&1
    \end{matrix}\right),\quad
    M_2=\left(\begin{matrix}
        0&0&1\\
        0&1&0\\
        1&0&1
    \end{matrix}\right).
    $$
    Note that
    $$
    A_0=\left(\begin{matrix}
        1&0&0\\
        0&0&0\\
        0&0&0
    \end{matrix}\right),\quad
    A_1=\left(\begin{matrix}
        0&1&0\\
        1&0&0\\
        0&0&0
    \end{matrix}\right),\quad
    A_2=\left(\begin{matrix}
        0&0&1\\
        0&1&0\\
        1&0&0
    \end{matrix}\right),
    $$
    $$
    A_3=\left(\begin{matrix}
        0&0&0\\
        0&0&1\\
        0&1&0
    \end{matrix}\right),\quad
    A_4=\left(\begin{matrix}
        0&0&0\\
        0&0&0\\
        0&0&1
    \end{matrix}\right).
    $$
    Then we get $M_0=A_0+A_3$, $M_1=A_1+A_3+A_4$, $M_2=A_2+A_4$. In particular, let
    \begin{equation*}
        \begin{aligned}
            \mathbf{a}_0&=(1001011\cdots),\\
            \mathbf{a}_1&=(0101110\cdots),\\
            \mathbf{a}_2&=(0010111\cdots),
        \end{aligned}
    \end{equation*}
    then $\mathbf{a}_0,\mathbf{a}_1,\mathbf{a}_2\in G(f)$ and the three row vectors of $M_i$ is the first three states of $\mathbf{a}_i$ and invertible in
    $M_2(\mathbb{F}_2)$.

    Furthermore, let $$C_{D^*}=\{(v\cdot d)_{d\in D^*}: v\in \mathbb{F}_{2^3}^m\}.$$ Then we have the following from \cite{sagar2022linear}
    \begin{equation*}
        \begin{aligned}
           \wt(c_{D^*}(v))&=|D|-\frac{1}{8}\sum_{d_0\in D_0}\sum_{d_1\in D_1}\sum_{d_2\in D_2}\left(1+(-1)^{\alpha_0\cdot d_0+\alpha_2\cdot d_1+\alpha_1\cdot d_2}
            \right)\times\\
            &\quad\left(1+(-1)^{\alpha_1\cdot d_0+(\alpha_0+\alpha_2)\cdot d_1+(\alpha_1+\alpha_2)\cdot d_2}\right)\times
            \left(1+(-1)^{\alpha_2\cdot d_0+\alpha_1\cdot d_1+(\alpha_0+\alpha_2)\cdot d_2}\right)\\
            &=\frac{7}{8}|D|-\frac{1}{8}\left(\chi_{\alpha_0}(D_0)\chi_{\alpha_2}(D_1)\chi_{\alpha_1}(D_3)+\right.\\
            &\quad\chi_{\alpha_1}(D_0)\chi_{\alpha_0+\alpha_2}(D_1)\chi_{\alpha_1+\alpha_2}(D_2)+\\
            &\quad\chi_{\alpha_2}(D_0)\chi_{\alpha_1}(D_1)\chi_{\alpha_1+\alpha_2}(D_2)+\chi_{\alpha_0+\alpha_1}(D_0)\chi_{\alpha_0}(D_1)\chi_{\alpha_2}(D_2)+\\
            &\quad\chi_{\alpha_0+\alpha_2}(D_0)\chi_{\alpha_1+\alpha_2}(D_1)\chi_{\alpha_0+\alpha_1+\alpha_2}(D_2)+\\
            &\quad\chi_{\alpha_1+\alpha_2}(D_0)\chi_{\alpha_0+\alpha_1+\alpha_2}(D_1)\chi_{\alpha_0+\alpha_1}(D_2)+\\
            &\quad\chi_{\alpha_0+\alpha_1+\alpha_2}(D_0)\chi_{\alpha_0+\alpha_1}(D_1)\chi_{\alpha_0}(D_2)\left.\right).
        \end{aligned}
    \end{equation*}
    \begin{equation*}
        \begin{aligned}
            \wt(c_{D^c}(v))&=\frac{7}{8}(|D^c|-2^{3m}\delta_{0,v})+\frac{1}{8}\left(\chi_{\alpha_0}(D_0)\chi_{\alpha_2}(D_1)\chi_{\alpha_1}(D_3)+\right.\\
            &\quad\chi_{\alpha_1}(D_0)\chi_{\alpha_0+\alpha_2}(D_1)\chi_{\alpha_1+\alpha_2}(D_2)+\\
            &\quad\chi_{\alpha_2}(D_0)\chi_{\alpha_1}(D_1)\chi_{\alpha_1+\alpha_2}(D_2)+\chi_{\alpha_0+\alpha_1}(D_0)\chi_{\alpha_0}(D_1)\chi_{\alpha_2}(D_2)+\\
            &\quad\chi_{\alpha_0+\alpha_2}(D_0)\chi_{\alpha_1+\alpha_2}(D_1)\chi_{\alpha_0+\alpha_1+\alpha_2}(D_2)+\\
            &\quad\chi_{\alpha_1+\alpha_2}(D_0)\chi_{\alpha_0+\alpha_1+\alpha_2}(D_1)\chi_{\alpha_0+\alpha_1}(D_2)+\\
            &\quad\chi_{\alpha_0+\alpha_1+\alpha_2}(D_0)\chi_{\alpha_0+\alpha_1}(D_1)\chi_{\alpha_0}(D_2)\left.\right).
        \end{aligned}
    \end{equation*}
    Thus we have
    \begin{equation*}
        \wt(c_{D^c}(v))+\wt(c_{D^*}(v))=7\times 2^{3(m-1)}(1-\delta_{0,v}).
    \end{equation*}}
\end{example}
\section{The parameters of $C_{D^*}$ and $C_{D^c}$}
When we choose $D_i=\Delta_{L_i}$ be a simplicial complex generated by $L_i$ where $L_i\subseteq [m]$, the weight of codewords in $C_{D^*}$ and $C_{D^c}$ can be easily
calculated, in order to see it, for $Y\subseteq[m]$, we define the Boolean function $\varphi(\cdot|Y)$: $\mathbb{F}_2^m\to \mathbb{F}_2$ as
\begin{equation*}
    \varphi(x|Y)=\prod_{i \in Y}\left(1-x_{i}\right)=\left\{\begin{array}{ll}
        1, & \text { if } \operatorname{Supp}(x) \cap Y=\emptyset ;\\
        0, & \text { if } \operatorname{Supp}(x) \cap Y \neq \emptyset.
        \end{array}\right.
\end{equation*}
Note that $\Delta_{L_i}$ is also a linear subspace of $\mathbb{F}_2^m$ and $\Delta_{L_i}^\perp =\Delta_{L_i^c}$. Thus from Corollary \ref{chi}, we have
\begin{equation*}
    \begin{aligned}
        \chi_x(\Delta_{L_i})&=\left\{\begin{array}{ccl}
            2^{|L_i|}&,&x\in\Delta_{L_i^c};\\
            0&,&x\notin\Delta_{L_i^c}.
        \end{array}\right.\\
        &=\left\{\begin{array}{ccl}
            2^{|L_i|}&,&\operatorname{Supp}(x) \cap L_i=\emptyset;\\
            0&,&\operatorname{Supp}(x) \cap L_i\neq\emptyset.
        \end{array}\right.\\
        &=2^{|L_i|}\varphi(x|L_i).
    \end{aligned}
\end{equation*}
Then from Equation \eqref{eq1} we have
\begin{equation}\label{eq3}
    \begin{aligned}
        \wt(c_{D^*}(v))&=|D|-\frac{1}{2^n}\sum_{w\in W}\chi_{\tilde{w_0}}(D_0)\chi_{\tilde{w_1}}(D_1)\cdots \chi_{\tilde{w_{n-1}}}(D_{n-1})\\
        &=2^{\sum_{i=0}^{n-1}|L_i|}-2^{\sum_{i=0}^{n-1}|L_i|-n}\sum_{w\in W}\varphi(\tilde{w_0}|L_0)\varphi(\tilde{w_1}|L_1)\cdots\varphi(\tilde{w_{n-1}}|L_{n-1})\\
        &=2^{\sum_{i=0}^{n-1}|L_i|-n}\left(2^n-\sum_{w\in W}\varphi(\tilde{w_0}|L_0)\varphi(\tilde{w_1}|L_1)\cdots\varphi(\tilde{w_{n-1}}|L_{n-1})\right).
    \end{aligned}
\end{equation}

Let $$g(w)=\varphi(\tilde{w_0}|L_0)\varphi(\tilde{w_1}|L_1)\cdots\varphi(\tilde{w_{n-1}}|L_{n-1})$$
and let $$\theta=\sum_{w\in W}g(w).$$ Then we only need to determin the value of $\theta$. Since $g(w)$ can only be $0$ or $1$,
$\theta=|\{w\in W:g(w)=1\}|$.
\begin{lemma} \label{value}
    Let $\hat{W}=\{w\in W:g(w)=1\}$, then $\hat{W}$ is a subspace of $W$. Thus all the possible value of $\theta$ can only be $2^i,~0\leq i\leq n$.
\end{lemma}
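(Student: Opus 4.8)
The claim has two parts: (i) $\hat{W}$ is a subspace of $W$, and (ii) consequently $\theta=|\hat{W}|$ is a power of $2$ between $2^0$ and $2^n$. Part (ii) is immediate once (i) is established, since $\hat{W}$ is then an $\mathbb{F}_2$-subspace of the $n$-dimensional space $W$ (recall $\dim W=n$ because $\gamma_0,\dots,\gamma_{n-1}$ are linearly independent), so $|\hat{W}|=2^i$ for some $0\le i\le n$, and $\theta=|\{w:g(w)=1\}|=|\hat{W}|$ because $g$ is $\{0,1\}$-valued. So the entire content is in showing $\hat{W}$ is closed under addition (closure under scalar multiplication by $\mathbb{F}_2$ is trivial, and $0\in\hat{W}$ since $g(0)=\prod_i\varphi(\tilde 0|L_i)=1$).

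**Key step.** The plan is to unwind the definition of $g$. For $w=(w_0,\dots,w_{n-1})\in W$ we have $g(w)=1$ exactly when $\varphi(\tilde{w_i}|L_i)=1$ for every $i$, i.e. when $\operatorname{Supp}(\tilde{w_i})\cap L_i=\emptyset$ for all $0\le i\le n-1$. Equivalently, writing $\tilde{w_i}=\sum_j (w_i)_j\varepsilon_j$, the condition is that the $i$-th coordinate block $\tilde{w_i}$ vanishes on the positions indexed by $L_i$. I would make this precise by setting
\begin{equation*}
    \hat{W}=\{w=(w_0,\dots,w_{n-1})\in W:\ (\tilde{w_i})_j=0\ \text{for all}\ j\in L_i,\ 0\le i\le n-1\}.
\end{equation*}
Each constraint ``$(\tilde{w_i})_j=0$'' is a single homogeneous $\mathbb{F}_2$-linear equation in $w$, because the map $w\mapsto(\tilde{w_i})_j$ is $\mathbb{F}_2$-linear (the tilde correspondence $V\to\mathbb{F}_2^m$ sending a formal sum $\sum k_t\alpha_t$ to its realization is linear, and extracting a fixed coordinate is linear). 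Hence $\hat{W}$ is the intersection of $W$ with the common kernel of finitely many linear functionals, so it is a linear subspace of $W$.

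**Alternative phrasing of the same step.** If one prefers to avoid coordinates, I would argue directly that $g$ is multiplicative-compatible with addition on the relevant set: suppose $w,w'\in\hat{W}$. Then for each $i$, $\operatorname{Supp}(\tilde{w_i})\cap L_i=\emptyset$ and $\operatorname{Supp}(\tilde{w_i'})\cap L_i=\emptyset$. Since $\operatorname{Supp}(\tilde{w_i}+\tilde{w_i'})\subseteq\operatorname{Supp}(\tilde{w_i})\cup\operatorname{Supp}(\tilde{w_i'})$, we get $\operatorname{Supp}(\widetilde{w_i+w_i'})\cap L_i=\emptyset$, hence $\varphi(\widetilde{w_i+w_i'}|L_i)=1$ for all $i$, so $g(w+w')=1$ and $w+w'\in\hat{W}$. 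Together with $0\in\hat{W}$ this gives that $\hat{W}$ is a subgroup, and being a subgroup of the $\mathbb{F}_2$-vector space $W$ it is automatically an $\mathbb{F}_2$-subspace.

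**Main obstacle.** There is no serious analytic difficulty here; the one point demanding care is the bookkeeping of the tilde map, namely verifying that $w\mapsto\tilde{w_i}$ and the support-union inclusion behave linearly over $\mathbb{F}_2$, so that the defining conditions are genuinely linear. Once that is pinned down, both (i) and (ii) follow. I would present the support-inclusion argument as the cleanest route, then conclude $\theta=|\hat{W}|=2^i$ with $0\le i\le n$ directly from $\dim W=n$.
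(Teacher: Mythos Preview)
Your proposal is correct and your ``alternative phrasing'' is essentially identical to the paper's own proof: both note $g(0)=1$, then use $\operatorname{Supp}(\tilde{w_i}+\tilde{w_i'})\subseteq\operatorname{Supp}(\tilde{w_i})\cup\operatorname{Supp}(\tilde{w_i'})$ to get closure under addition, and finish by observing $\theta=|\hat{W}|=2^i$ with $0\le i\le \dim W=n$. Your first phrasing via linear functionals is a mild repackaging of the same idea and is also fine.
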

\begin{proof}
    Note that $g(0)=1$ and for $w=(w_0,w_1,\cdots,c_{n-1}),w'=(w_0',w_1',\cdots,w_{n-1}')\in W$ such that $g(w)=g(w')=1$, then 
    $\varphi(\tilde{w_i}|L_i)=\varphi(\tilde{w_i'}|L_i)=1$ for all $0\leq i\leq n-1$, thus
    $\operatorname{Supp}(\tilde{w_i})\cap L_i=\operatorname{Supp}(\tilde{w_i'})\cap L_i=\emptyset$
    for all $0\leq i\leq n-1$. Observe that $\operatorname{Supp}(\tilde{w_i}+\tilde{w_i'})\subseteq \operatorname{Supp}(\tilde{w_i})\cup
    \operatorname{Supp}(\tilde{w_i'})$, it follows that $\operatorname{Supp}(\tilde{w_i}+\tilde{w_i'})\cap L_i=\emptyset$ for all $0\leq i\leq n-1$. Then we have 
    $g(w+w')=1$ and $w+w'\in \hat{W}$, hence $\hat{W}$ is a subspace of $W$, which implies that all the possible value of $\theta=|\hat{W}|$ can only be 
    $2^i,0\leq i\leq n$.
\end{proof}

Let $V_0=\{\mathtt{a}=k_0\alpha_0+k_1\alpha_1+\cdots+k_{n-1}\alpha_{n-1}\in V:k_0=0\}$. For $1\leq l\leq n-1$, let $W_l=\{w=(w_0,w_1,\cdots,w_{n-1})\in W:
w_i\in V_0,\forall~i\leq l-1\}$. Then $W_l$ is a subspace of $W$. Note that the first column of $M_l$ is $\varepsilon_{l+1}$, which implies that
$\gamma_l\in W_l$ for $1\leq i\leq n-1$ and $\gamma_l\notin W_{l+1}$ for $0\leq i\leq n-2$. Therefore $0\subsetneqq W_{n-1}\subsetneqq
W_{n-2}\subsetneqq \cdots\subsetneqq W_1\subsetneqq W$ and then $0<\dim W_{n-1}<\dim W_{n-2}<\cdots<\dim W_1<n$, thus $\dim W_l=n-l$. Since
$\gamma_l,\gamma_{l+1},\cdots,\gamma_{n-1}\in W_l$, we have $W_l$ is a subspace of $W$ generated by $\{\gamma_l,\gamma_{l+1},\cdots,\gamma_{n-1}\}$.

Now we can introduce our main results about $C_{D^*}$ and $C_{D^c}$.
\begin{proposition}\label{th1}
    Consider the defining set $D=\Delta_{L_0}+\omega\Delta_{L_1}+\cdots+\omega^{n-1}\Delta_{L_{n-1}}\subseteq \mathbb{F}_{2^n}^m$, where
    $m\in \mathbb{N}$ and $L_i$ are nonempty subsets of $[m]$. Let $R_i=L_i\backslash\cup_{j\neq i}L_j$ and suppose that $R_0,R_1,\cdots,R_{n-2}$
    are nonempty. Then the code $C_{D^*}$ is an $n$-weight linear code over $\mathbb{F}_{2^n}$ of length $2^{\sum_{i=0}^{n-1}|L_i|}-1$,
    dimension $\left|\cup_{i=0}^{n-1}L_i\right|$ and distance $2^{\sum_{i=0}^{n-1}|L_i|-1}$. In particular, $C_{D^*}$ have nonzero codewords of
    weights $2^{\sum_{i=0}^{n-1}|L_i|-n}(2^n-2^i)$, $0\leq i \leq n-1$.
\end{proposition}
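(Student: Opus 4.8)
The plan is to feed the weight formula \eqref{eq3} into Lemmas \ref{value} and \ref{M}, which already determine the shape of every weight, leaving the real work to the realization of each admissible weight by an explicit $v$. The length and dimension are immediate. Since $\{1,\omega,\dots,\omega^{n-1}\}$ is an $\mathbb{F}_2$-basis of $\mathbb{F}_{2^n}$, the decomposition $d=d_0+\omega d_1+\cdots+\omega^{n-1}d_{n-1}$ with $d_i\in\Delta_{L_i}$ is unique, so $|D|=\prod_{i=0}^{n-1}|\Delta_{L_i}|=2^{\sum_{i=0}^{n-1}|L_i|}$ and the length is $|D^*|=|D|-1=2^{\sum_{i=0}^{n-1}|L_i|}-1$. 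For the dimension, $c_{D^*}$ is $\mathbb{F}_{2^n}$-linear with kernel $D^\perp=\langle D\rangle^\perp$, where $\langle D\rangle$ is the $\mathbb{F}_{2^n}$-span of $D$; choosing $d_i=\varepsilon_j$ with $j\in L_i$ and all other components $0$ shows $\omega^i\varepsilon_j\in D$, hence $\varepsilon_j\in\langle D\rangle$ for every $j\in\bigcup_iL_i$, while trivially $D\subseteq\langle\varepsilon_j:j\in\bigcup_iL_i\rangle$. Thus $\dim_{\mathbb{F}_{2^n}}\langle D\rangle=\bigl|\bigcup_{i=0}^{n-1}L_i\bigr|$, and since the standard form is nondegenerate, $\dim C_{D^*}=m-\dim D^\perp=\bigl|\bigcup_{i=0}^{n-1}L_i\bigr|$.

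For the weights, \eqref{eq3} gives $\wt(c_{D^*}(v))=2^{\sum|L_i|-n}(2^n-\theta)$ with $\theta=|\hat W|$, and Lemma \ref{value} restricts $\theta$ to $\{2^0,2^1,\dots,2^n\}$, the value $\theta=2^n$ corresponding exactly to the zero codeword. Consequently every nonzero weight lies in $\{2^{\sum|L_i|-n}(2^n-2^i):0\le i\le n-1\}$, a set of $n$ distinct values whose smallest member, reached at $i=n-1$, equals $2^{\sum|L_i|-1}$. It therefore suffices to show that each of these $n$ weights is realized; this yields at once the $n$-weight property and the minimum distance $2^{\sum|L_i|-1}$.

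The realization step is the crux, and the hypothesis $R_0,\dots,R_{n-2}\ne\emptyset$ enters only here. Fix $0\le i\le n-1$, set $l=n-i$, and take $v=\alpha_0$ (all other $\alpha_p=0$) with $\alpha_0\in\mathbb{F}_2^m$ chosen so that $\operatorname{Supp}(\alpha_0)$ meets $L_0,\dots,L_{l-1}$ but avoids $L_l,\dots,L_{n-1}$: for $l\le n-1$ put $\alpha_0=\sum_{p=0}^{l-1}\varepsilon_{r_p}$ with $r_p\in R_p$ (valid because $R_0,\dots,R_{l-1}$ are nonempty and pairwise disjoint and $r_p\in L_k$ iff $k=p$), while for $l=n$ the avoidance clause is vacuous and one takes $\alpha_0=\sum_{p=0}^{n-2}\varepsilon_{r_p}+\varepsilon_t$ with $t\in L_{n-1}\ne\emptyset$, which meets every $L_k$. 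Writing a generic $w\in W$ as $\sum_j s_j\gamma_j$, the explicit form of $M_i$ in Lemma \ref{M} gives $\tilde{w_k}=\mathbf{a}(s)_k\,\alpha_0$ with $\mathbf{a}(s)=\sum_j s_j\mathbf{a}_j$; since each $\mathbf{a}_j$ has initial state $\varepsilon_{j+1}$ by Proposition \ref{prop1}, the first $n$ terms of $\mathbf{a}(s)$ are $(s_0,\dots,s_{n-1})$, whence $\tilde{w_k}=s_k\alpha_0$ for $0\le k\le n-1$. Thus $g(w)=1$ iff $s_k=0$ for every $k$ with $\operatorname{Supp}(\alpha_0)\cap L_k\ne\emptyset$, i.e. iff $s_0=\cdots=s_{l-1}=0$; this is a subspace of $W$ of dimension $n-l$, so $\theta=2^{n-l}=2^i$ and $\wt(c_{D^*}(\alpha_0))=2^{\sum|L_i|-n}(2^n-2^i)$, with $\alpha_0\ne0$ giving a nonzero codeword.

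The main obstacle is exactly engineering the support of $\alpha_0$ so that the locus $\{g=1\}$ is cut out by precisely the conditions $s_0=\cdots=s_{l-1}=0$: the nonemptiness and pairwise disjointness of $R_0,\dots,R_{n-2}$ is what makes a selective $\alpha_0$ available—meeting the first $l$ of the sets $L_k$ and missing the rest—while the identity $\tilde{w_k}=s_k\alpha_0$ turns this combinatorial support condition into the clean linear condition of the desired corank.
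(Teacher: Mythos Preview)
Your proof is correct and follows essentially the same route as the paper's. Both arguments compute length and dimension directly, invoke Lemma~\ref{value} to restrict $\theta$ to powers of two, and then realize each value $\theta=2^{n-l}$ by taking $v=\alpha_0$ with $\alpha_1=\cdots=\alpha_{n-1}=0$ and $\operatorname{Supp}(\alpha_0)$ chosen from the $R_p$'s so as to meet exactly $L_0,\dots,L_{l-1}$; your explicit identity $\tilde{w_k}=s_k\alpha_0$ is precisely the content of the paper's subspace $W_l$ (defined just before the proposition) once one notes that $W_l=\{\sum_j s_j\gamma_j:s_0=\cdots=s_{l-1}=0\}$. The only cosmetic differences are that the paper computes the dimension via $|\ker c_{D^*}|$ rather than via $\langle D\rangle$, and for the case $l=n$ it takes $\alpha_0=(1,\dots,1)$ instead of your more economical $\sum_{p=0}^{n-2}\varepsilon_{r_p}+\varepsilon_t$.
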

\begin{proof}
    Observe the length of the code $C_{D^*}$ is $2^{\sum_{i=0}^{n-1}|L_i|}-1$. Consider the map $c_{D^*}:\mathbb{F}_{2^n}^m \to C_{D^*}$
    defined by $c_{D^*}(v)=(v\cdot d)_{d\in D^*}$ which is a surjective linear transformation and$$
        \begin{aligned}
            \ker(c_{D^*})&=\{v=(v_i)\in \mathbb{F}_{2^n}^m : v\cdot d=0,\forall~d\in D^*\}\\
                         &=\{v=(v_i)\in \mathbb{F}_{2^n}^m : v_i=0~\text{for}~i\in \cup_{i=0}^{n-1}L_i\}.\\
        \end{aligned}$$
    Therefore, $\left|\ker(c_{D^*})\right|=2^{n\left(m-\left|\cup_{i=0}^{n-1}L_i\right|\right)}$. By the first isomorphism theorem of groups, we have
    $|C_{D^*}|=|\mathbb{F}_{2^n}^m|/\left|\ker(c_{D^*})\right|=2^{n\left|\cup_{i=0}^{n-1}L_i\right|}$.
    Hence $\dim(C_{D^*})=\left|\cup_{i=0}^{n-1}L_i\right|$.

    Lemma \ref{value} has showed that all the possible value of $\theta$ can only be $2^i,~0\leq i\leq n$, now we show that $\theta$ can equal to each
    $2^i,~0\leq i\leq n$ when we set different $v$ under the assumption $R_0,R_1,\cdots,R_{n-2}$ are nonempty.

    Now we set $\alpha_1=\alpha_2=\cdots=\alpha_{n-1}=(0,0,\cdots,0)$, then $v=\alpha_0$.

    If $\alpha_0=(0,0,\cdots,0)$, we have $\operatorname{Supp}(\alpha_0)\cap L_j=\emptyset$ for all $0\leq j\leq n-1$ and $\varphi(\alpha_0|L_j)=1$ for all
    $0\leq j\leq n-1$, then $g(w)=1$ for all $w\in W$, thus $\theta=2^n$.

    If $\operatorname{Supp}(\alpha_0)=\{i_0,i_1,\cdots,i_{l-1}\}$ where $i_k\in R_k$, $0\leq k\leq l-1\leq n-2$. Then $\operatorname{Supp}(\alpha_0)
    \cap L_j\neq\emptyset$ for $0\leq j\leq l-1$ and $\operatorname{Supp}(\alpha_0)\cap L_j=\emptyset$ for $l\leq j\leq n-1$. Thus $\varphi(\alpha_0|L_j)=0$ for
    $0\leq j\leq l-1$ and $\varphi(\alpha_0|L_j)=1$ for $l\leq j\leq n-1$. Then $g(w)$ is equal to 1 for all $w\in W_l$ and 0 for all $w\notin W_l$, consequently
    $\theta=|W_l|=2^{n-l}$, $1\leq l\leq n-1$.

    If $\alpha_0=(1,1,\cdots,1)$, we have $\operatorname{Supp}(\alpha_0)\cap L_j\neq\emptyset$ for all $0\leq j\leq n-1$ and $\varphi(\alpha_0|L_j)=0$ for all
    $0\leq j\leq n-1$, then $g(w)=1$ for all nonzero $w\in W$, thus $\theta=1$.

    For $\theta=2^i$, $0\leq i\leq n$, we have
    \begin{equation*}
        \begin{aligned}
            \wt(c_{D^*}(v))&=2^{\sum_{i=0}^{n-1}|L_i|-n}\left(2^n-\sum_{w\in W}\varphi(\tilde{w_0}|L_0)\varphi(\tilde{w_1}|L_1)\cdots\varphi(\tilde{w_{n-1}}|L_{n-1})
            \right)\\
            &=2^{\sum_{i=0}^{n-1}|L_i|-n}(2^n-\theta)\\
            &=2^{\sum_{i=0}^{n-1}|L_i|-n}(2^n-2^i).
        \end{aligned}
    \end{equation*}
    Thus the distance of $C_{D^*}$ is $2^{\sum_{i=0}^{n-1}|L_i|-n}(2^n-2^{n-1})=2^{\sum_{i=0}^{n-1}|L_i|-1}$.
\end{proof}
\begin{remark}
    {\rm The cases $n=1,2,3$ in Proposition $\ref{th1}$ are Example $3.4$ in \cite{hyun2020infinite}, Proposition $4.2$ in \cite{wu2022quaternary} and Proposition $3.3$
    in \cite{sagar2022linear} respectively.}
\end{remark}

\begin{example}
    {\rm Set $n=3,m=4$ and $L_0=\{1,2\},L_1=\{2,3\},L_2=\{3,4\}\subseteq [4]=\{1,2,3,4\}$. Consider the definition set $D=\Delta_{L_0}+\omega\Delta_{L_1}+\omega^2\Delta_{L_2}
    \subset \mathbb{F}_8^4$, where
    \begin{equation*}
        \begin{aligned}
            \Delta_{L_0}&=\{(0,0,0,0),(1,0,0,0),(0,1,0,0),(1,1,0,0)\},\\
            \Delta_{L_1}&=\{(0,0,0,0),(0,1,0,0),(0,0,1,0),(0,1,1,0)\},\\
            \Delta_{L_2}&=\{(0,0,0,0),(0,0,1,0),(0,0,0,1),(0,0,1,1)\}.
        \end{aligned}
    \end{equation*}
    Then $C_{D^*}$ is a $[63,4,32]$ linear code over $\mathbb{F}_8$ which is not distance optimal since the optimal distance of a $[63,4]$ linear code is $54$ by
    \cite{codetables}.}
\end{example}
If all the $n$ subsets in Proposition \ref{th1} are nonempty and equal then we have the following result.
\begin{theorem}\label{th3.2}
    Let $L$ be an nonempty subset of $[m]$. Consider the defining set $D=\Delta_L+\omega\Delta_L+\cdots+\omega^{n-1}\Delta_L\subseteq \mathbb{F}_{2^n}^m$.
    Then the code $C_{D^*}$ is a 1-weight linear code over $\mathbb{F}_{2^n}^m$ of length $2^{n|L|}-1$, dimension $|L|$ and distance $(2^n-1)\times 2^{n(|L|-1)}$.
    In particular, it is a minimal code. Moreover, $C_{D^*}$ is a Griesmer code.
\end{theorem}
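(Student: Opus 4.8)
The plan is to run the same scheme as in Proposition \ref{th1}, but with one essential caveat: the hypothesis of that proposition (the sets $R_i=L_i\setminus\bigcup_{j\neq i}L_j$ being nonempty) \emph{fails} here, since $L_0=\cdots=L_{n-1}=L$ forces $R_i=\emptyset$. Consequently only the length and the dimension transfer verbatim, while the weight must be handled by a direct argument. For the length, $|D|=\prod_{i=0}^{n-1}|\Delta_L|=2^{n|L|}$, so $C_{D^*}$ has length $2^{n|L|}-1$. For the dimension, the kernel computation of Proposition \ref{th1} applies unchanged: $\ker(c_{D^*})=\{v=(v_i)\in\mathbb{F}_{2^n}^m: v_i=0 \text{ for } i\in\bigcup_{i}L_i\}$, and since $\bigcup_i L_i=L$ we get $\dim(C_{D^*})=|L|$.

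For the weight I would start from \eqref{eq3}, which here reads $\wt(c_{D^*}(v))=2^{n(|L|-1)}(2^n-\theta)$ with $\theta=\sum_{w\in W}g(w)$ and $g(w)=\prod_{i=0}^{n-1}\varphi(\tilde{w_i}\,|\,L)$. By Lemma \ref{value} the set $\hat W=\{w\in W:g(w)=1\}$ is a subspace, so $\theta=|\hat W|\in\{2^0,\dots,2^n\}$. The heart of the proof is the claim that for every $v$ giving a \emph{nonzero} codeword one has $\hat W=\{0\}$, hence $\theta=1$ and $\wt(c_{D^*}(v))=2^{n(|L|-1)}(2^n-1)=(2^n-1)\times 2^{n(|L|-1)}$; this would simultaneously show that $C_{D^*}$ is $1$-weight and identify its distance.

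To prove the claim I would introduce the $\mathbb{F}_2$-linear \emph{evaluation map} $\rho\colon V\to\mathbb{F}_2^{|L|}$ sending $\mathtt{a}=\sum_{j}k_j\alpha_j$ to the restriction $(\tilde{\mathtt{a}}_i)_{i\in L}$ of $\tilde{\mathtt{a}}$ to the coordinates indexed by $L$. Since $\varphi(\tilde{w_i}\,|\,L)=1$ is equivalent to $\operatorname{Supp}(\tilde{w_i})\cap L=\emptyset$, which is equivalent to $\rho(w_i)=0$, we get $g(w)=1$ iff $\rho(w_i)=0$ for all $0\le i\le n-1$. Now take any nonzero $w=(w_0,\dots,w_{n-1})\in W$ with $g(w)=1$. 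By Lemma \ref{M}(3) the set $\{w_0,\dots,w_{n-1}\}$ is a basis of $V$, so $\rho$ vanishes on a basis and hence $\rho\equiv 0$ on $V$; in particular $\rho(\alpha_j)=\alpha_j|_L=0$ for every $j$, which forces $v|_L=0$, i.e. $v\in\ker(c_{D^*})$, contradicting that $v$ yields a nonzero codeword. Thus $\hat W=\{0\}$ and $\theta=1$, as required. I expect this step to be the main obstacle, precisely because Proposition \ref{th1} is not available; the resolution is the observation that $g(w)=1$ amounts to the vanishing of $\rho$ on $w$, after which Lemma \ref{M}(3) does the work.

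Finally, minimality and the Griesmer property are formal. Since $C_{D^*}$ is $1$-weight, $w_{\min}/w_{\max}=1>(2^n-1)/2^n$, so $C_{D^*}$ is minimal by Lemma \ref{minimal}. For the Griesmer property, write $q=2^n$, $k=|L|$ and note $d=(q-1)q^{k-1}$. Then for $0\le i\le k-1$ the quantity $d/q^i=(q-1)q^{k-1-i}$ is an integer, so each ceiling in \eqref{Griesmer Bound} is exact and
\begin{equation*}
  \sum_{i=0}^{k-1}\left\lceil\frac{d}{q^i}\right\rceil=(q-1)\sum_{j=0}^{k-1}q^{j}=q^{k}-1=2^{n|L|}-1,
\end{equation*}
which equals the length $n$ of the code. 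Hence equality holds in the Griesmer bound and $C_{D^*}$ is a Griesmer code, in fact the $\mathbb{F}_{2^n}$-simplex code of dimension $|L|$.
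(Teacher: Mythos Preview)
Your proof is correct and follows essentially the same route as the paper's: both start from \eqref{eq3}, invoke Lemma \ref{M}(3) to argue that if some nonzero $w\in W$ satisfies $g(w)=1$ then every element of $V$ has support disjoint from $L$ (the paper phrases this as $\varphi(\tilde{\mathtt a}\mid L)=1$ for all $\mathtt a\in V$, you package it via the linear map $\rho$), conclude $\theta\in\{1,2^n\}$, and then verify the Griesmer identity in the same way. The only slip is cosmetic: in your final line ``equals the length $n$ of the code'' the symbol $n$ should be $2^{n|L|}-1$, since $n$ already denotes the extension degree.
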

\begin{proof}
    From \eqref{eq3}, we have
    \begin{equation*}
        \wt(c_{D^*}(v))=2^{n(|L|-1)}\left(2^n-\sum_{w\in W}\varphi(\tilde{w_0}|L)\varphi(\tilde{w_1}|L)\dotsb \varphi(\tilde{w_{n-1}}|L)\right).
    \end{equation*}
    If there exists a $w=(w_0,\cdots,w_{n-1})\in W^*$
    such that
    \begin{equation*}
        g(w)=\varphi(\tilde{w_0}|L)\varphi(\tilde{w_1}|L)\cdots \varphi(\tilde{w_{n-1}}|L)=1,
    \end{equation*}
    then we have $\varphi(\tilde{w_i}|L)=1$ for all $0\leq i\leq n-1$ and thus $\varphi(\tilde{\mathtt{a}}|L)=1$ for all $\mathtt{a}\in V$ since
    $\{w_0,w_1,\cdots,w_{n-1}\}$ is a basis of $V$ by Lemma \ref{M}.
    Then we have $g(w)=1$ for all $w\in W$. Thus the value of $\theta=|\hat{W}|$ can only be $1$ or $2^n$, then the minimal distance of $C_{D^*}$ is
    $(2^n-1)\times 2^{n(|L|-1)}$.

    From Proposition \ref{th1} we have $\dim C_{D^*}=|L|$, then
    \begin{equation*}
        \begin{aligned}
            \sum_{i=0}^{|L|-1}\left\lceil \frac{d}{q^i}\right\rceil&=\sum_{i=0}^{|L|-1}\frac{(2^n-1)\cdot 2^{n(|L|-1)}}{2^{ni}}\\
            &=\sum_{i=0}^{|L|-1}(2^n-1)\cdot 2^{n(|L|-1-i)}\\
            &=\sum_{i=0}^{|L|-1}(2^n-1)\cdot 2^{ni}\\
            &=2^{n|L|}-1.
        \end{aligned}
    \end{equation*}
    Thus $C_{D^*}$ is a Griesmer code.
\end{proof}
\begin{remark}
    {\rm The case $n=3$ in Theorem $\ref{th3.2}$ is Theorem $3.4$ in \cite{sagar2022linear}. }
\end{remark}
\begin{example}
    {\rm Set $n=2,m=4$ and $L=\{1,2,3\}\subseteq [4]=\{1,2,3,4\}$. Consider the defining set $D=\Delta_L+\omega\Delta_L\subseteq \mathbb{F}_4^4$, where
    \begin{equation*}
        \begin{aligned}
            \Delta_L=&\{(0,0,0,0),(1,0,0,0),(0,1,0,0),(0,0,1,0),\\
            &(1,1,0,0),(1,0,1,0),(0,1,1,0),(1,1,1,0)\}.
        \end{aligned}
    \end{equation*}
    Then the code $C_{D^*}$ is a $[63,3,48]$ 1-weight linear code over $\mathbb{F}_4$ which is distance optimal by \cite{codetables}.}
\end{example}
\begin{theorem}\label{th3.3}
    Suppose $L_i$ be nonempty subsets of $[m]$ such that at least one subset is proper where $m\in\mathbb{N}$. Let $R_i=L_i\backslash\cup_{j\neq i}L_j$ and suppose
    that $R_0,R_1,\cdots,R_{n-2}$ are nonempty. Let $D=\Delta_{L_0}+\omega\Delta_{L_1}+\cdots+\omega^{n-1}\Delta_{L_{n-1}}\subseteq \mathbb{F}_{2^n}^m$. Then the
    code $C_{D^c}$ is a $(n+1)$-weight linear code over $\mathbb{F}_{2^n}$ of length $2^{nm}-2^{\sum_{i=0}^{n-1}|L_i|}$, dimension $m$ and distance
    $(2^n-1)\times(2^{n(m-1)}-2^{\sum_{i=0}^{n-1}|L_i|-n})$. In particular, $C_{D^c}$ have nonzero codewords of weights
    $(2^n-1)\times2^{n(m-1)}-2^{\sum_{i=0}^{n-1}|L_i|-n}(2^n-2^i)$, $0\leq i \leq n$.
    Furthermore, $C_{D^c}$ is a Griesmer code hence it is distance optimal. In fact, it is a minimal code if
    $\sum_{i=0}^{n-1}|L_i|\leq nm-(n+1)$.
\end{theorem}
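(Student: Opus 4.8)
The plan is to deduce every invariant of $C_{D^c}$ from the already-analyzed code $C_{D^*}$ through the duality identity of Lemma \ref{im}, and then to confirm the Griesmer equality by a direct evaluation of the bound. First I would record the two cheap invariants. The length is $|D^c| = 2^{nm} - |D| = 2^{nm} - 2^{\sum_{i=0}^{n-1}|L_i|}$. For the dimension I would show that $\ker(c_{D^c}) = \{v : v\cdot d = 0 \text{ for all } d\in D^c\} = \{0\}$: a nonzero $v$ in this kernel would force the complement of the hyperplane $H_v = \{d : v\cdot d = 0\}$ to lie inside $D$, i.e. $2^{nm}-2^{n(m-1)} \le |D| = 2^{\sum|L_i|}$. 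Since at least one $L_i$ is proper we have $\sum|L_i| \le nm-1$, so $|D| \le 2^{nm-1} < 2^{nm}-2^{n(m-1)}$ for $n\ge 2$ (the case $n=1$ being settled by $0\in D$), a contradiction. Hence $c_{D^c}$ is injective and $\dim C_{D^c} = m$.

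Next, for the weights I would invoke Lemma \ref{im}, which for every nonzero $v$ gives
\[
\wt(c_{D^c}(v)) = (2^n-1)2^{n(m-1)} - \wt(c_{D^*}(v)).
\]
By the analysis in the proof of Proposition \ref{th1}, $\wt(c_{D^*}(v)) = 2^{\sum|L_i|-n}(2^n-\theta)$ with $\theta = |\hat W|\in\{2^0,\dots,2^n\}$, and the hypotheses that $R_0,\dots,R_{n-2}$ are nonempty let $\theta$ attain each value $2^i$, $0\le i\le n$, via a suitable $v=\alpha_0$. The point that distinguishes $C_{D^c}$ from $C_{D^*}$ is that $c_{D^c}$ is injective while $c_{D^*}$ is not: the nonzero vectors of $\ker(c_{D^*})$ (those realizing $\theta=2^n$, available since $\bigcup_i L_i\subsetneq[m]$) now yield a genuine nonzero codeword of weight $(2^n-1)2^{n(m-1)}$ instead of the zero word, supplying the $(n+1)$-th weight. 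Substituting gives the weights $(2^n-1)2^{n(m-1)} - 2^{\sum|L_i|-n}(2^n-2^i)$ for $0\le i\le n$; these strictly increase with $i$ and so are pairwise distinct, proving $C_{D^c}$ is $(n+1)$-weight, with minimum at $i=0$, i.e. $d = (2^n-1)(2^{n(m-1)}-2^{\sum|L_i|-n})$.

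The main obstacle is the Griesmer equality, which I expect to be the only genuinely delicate bookkeeping. Writing $s = \sum_{i=0}^{n-1}|L_i|$, $q=2^n$, $k=m$, I would compute $d/q^i = (2^n-1)(2^{n(m-1-i)} - 2^{s-n(i+1)})$ and split $\sum_{i=0}^{m-1}\lceil d/q^i\rceil$ according to the sign of $s-n(i+1)$. Putting $s=np+e$ with $0\le e<n$, the terms with $i\le p-1$ are integers and sum (as two geometric progressions) to $2^{nm}-2^{n(m-p)}-2^s+2^e$; for $i\ge p$ the subtracted quantity is a proper fraction, so $\lceil A-B\rceil = A-\lfloor B\rfloor$, and only $i=p$ contributes a nonzero floor, namely $2^e-1$, giving $2^{n(m-p)}-2^e$. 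The two pieces add to exactly $2^{nm}-2^s$, the length, so $C_{D^c}$ meets the Griesmer bound and is distance optimal.

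Finally, for minimality I would apply Lemma \ref{minimal}. Since the maximum weight is at most $(2^n-1)2^{n(m-1)}$ (attained at $i=n$), the ratio of minimum to maximum weight is at least
\[
\frac{(2^n-1)(2^{n(m-1)}-2^{s-n})}{(2^n-1)2^{n(m-1)}} = 1-2^{s-nm}.
\]
This exceeds $\frac{2^n-1}{2^n}=1-2^{-n}$ exactly when $2^{s-nm}<2^{-n}$, i.e. $s\le nm-(n+1)$, which is the stated hypothesis; Lemma \ref{minimal} then yields minimality. Everything but the Griesmer sum follows mechanically from Lemma \ref{im} and Proposition \ref{th1}, so the case split and the two geometric series in the Griesmer step are where I would concentrate the care.
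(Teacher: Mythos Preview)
Your approach is essentially the paper's: deduce the weight spectrum of $C_{D^c}$ from Proposition~\ref{th1} via Lemma~\ref{im}, then verify the Griesmer equality by writing $s=\sum|L_i|$ in the form $np+e$ (the paper writes $nk+l$) and splitting the sum accordingly, and finish with the Ashikhmin--Barg ratio. Your Griesmer bookkeeping is correct and matches the paper's line by line.

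Two points deserve comment. First, your separate hyperplane-counting argument for $\dim C_{D^c}=m$ is correct but a detour: once you invoke Lemma~\ref{im} for the weights, the paper's route---$\wt(c_{D^c}(v))\ge (2^n-1)(2^{n(m-1)}-2^{s-n})>0$ for $v\neq 0$---gives injectivity for free, so there is no need for the cardinality comparison with its $n=1$ side case.

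Second, and more substantively, your justification of the $(n{+}1)$-st weight contains a false step: you assert that $\bigcup_i L_i\subsetneq[m]$ follows from ``at least one $L_i$ proper,'' but it does not (e.g.\ $n=2$, $m=3$, $L_0=\{1,2\}$, $L_1=\{2,3\}$ satisfies all hypotheses yet has $\bigcup_i L_i=[m]$). The weight $(2^n-1)2^{n(m-1)}$ is attained precisely by nonzero $v\in\ker(c_{D^*})$, and by the kernel description in Proposition~\ref{th1} this requires $\bigcup_i L_i\neq[m]$. The paper's proof, to be fair, does not verify the $(n{+}1)$-weight assertion explicitly either; but you should not manufacture a false implication to fill the gap. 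Either add $\bigcup_i L_i\subsetneq[m]$ as a hypothesis, or note that when $\bigcup_i L_i=[m]$ the code is only $n$-weight (all other conclusions---length, dimension, minimum distance, Griesmer, minimality---remain valid regardless).
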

\begin{proof}
    For nonzero $v\in \mathbb{F}_{2^n}^m$, from Lemma \ref{im} we have
    \begin{equation*}
        \wt(c_{D^c}(v))+\wt(c_{D^*}(v))=(2^n-1)\times 2^{n(m-1)}.
    \end{equation*}
    From Theorem \ref{th1}, $\wt(c_{D^*}(v))\leq 2^{\sum_{i=0}^{n-1}| L_i|-n}(2^n-1)$, thus for nonzero $v\in \mathbb{F}_{2^n}^m$, we have
    \begin{equation*}
        \begin{aligned}
            \wt(c_{D^c}(v))&\geq (2^n-1)\times 2^{n(m-1)}-2^{\sum_{i=0}^{n-1}|L_i|-n}(2^n-1)\\
            &=\frac{2^n-1}{2^n}\left(2^{nm}-2^{\sum_{i=0}^{n-1}|L_i|}\right)\\
            &>0
        \end{aligned}
    \end{equation*}
    since $L_i$ is a proper subset of $[m]$ for some $i$. Thus $\ker(c_{D^c})=0$ and then $\dim(C_{D^c})=m$.
    Now we have
    \begin{equation*}
        \begin{aligned}
            \sum_{i=0}^{m-1}\left\lceil \frac{d}{q^i}\right\rceil&=\sum_{i=0}^{m-1}\left\lceil \frac{(2^n-1)\times(2^{n(m-1)}-2^{\sum_{i=0}^{n-1}|L_i|-n})}{2^{ni}}
            \right\rceil\\
            &=\sum_{i=0}^{m-1}(2^n-1)\cdot 2^{n(m-1)-ni}-\sum_{i=0}^{m-1}\left\lfloor\frac{(2^n-1)\cdot 2^{\sum_{i=0}^{n-1}|L_i|-n}}{2^{ni}}\right\rfloor\\
            &=\left(2^{nm}-1\right)-\sum_{i=0}^{m-1}\left\lfloor(2^n-1)\cdot 2^{\sum_{i=0}^{n-1}|L_i|-n-ni}\right\rfloor.
        \end{aligned}
    \end{equation*}
    If $\sum_{i=0}^{n-1}|L_i|\equiv l \mod{n}$, $0\leq l\leq n-1$, let $k=\frac{\sum_{i=0}^{n-1}|L_i|-l}{n}$, then
    \begin{equation*}
        \begin{aligned}
            &\quad\sum_{i=0}^{m-1}\left\lfloor(2^n-1)\cdot 2^{\sum_{i=0}^{n-1}|L_i|-n-ni}\right\rfloor\\
            &=\sum_{i=0}^{k-1}\left\lfloor(2^n-1)\cdot 2^{\sum_{i=0}^{n-1}|L_i|-n-ni}\right\rfloor+(2^l-1)\\
            &=\sum_{i=0}^{k-1}(2^n-1)\cdot 2^{l+ni}+(2^l-1)\\
            &=2^l\cdot \left(2^{\sum_{i=0}^{n-1}|L_i|-l}-1\right)+(2^l-1)\\
            &=2^{\sum_{i=0}^{n-1}|L_i|}-1.
        \end{aligned}
    \end{equation*}
    Thus
    \begin{equation*}
        \begin{aligned}
            \sum_{i=0}^{m-1}\left\lceil \frac{d}{q^i}\right\rceil&=\left(2^{nm}-1\right)-\sum_{i=0}^{m-1}\left\lfloor(2^n-1)\cdot 2^{\sum_{i=0}^{n-1}-n-ni}
            \right\rfloor\\
            &=\left(2^{nm}-1\right)-\left(2^{\sum_{i=0}^{n-1}|L_i|}-1\right)\\
            &=2^{nm}-2^{\sum_{i=0}^{n-1}|L_i|}.
        \end{aligned}
    \end{equation*}
    Hence the code $C_{D^c}$ is a Griesmer code. From Lemma \ref{minimal}, we have $C_{D^c}$ is a minimal code if
    \begin{equation*}
        \begin{aligned}
            \frac{\wt_{min}}{\wt_{max}}&=\frac{(2^n-1)\times(2^{n(m-1)}-2^{\sum_{i=0}^{n-1}|L_i|-n})}{(2^n-1)\times 2^{n(m-1)}}\\
            &=1-2^{\sum_{i=0}^{n-1}|L_i|-nm}\\
            &>\frac{2^n-1}{2^n},
        \end{aligned}
    \end{equation*}
    which is equivalent to $\sum_{i=0}^{n-1}|L_i|\leq nm-(n+1)$.
\end{proof}
\begin{remark}
    {\rm The cases $n=1,2,3$ in Theorem \ref{th3.3} are Example $3.4$ in \cite{hyun2020infinite}, Theorem $4.4$ in \cite{wu2022quaternary} and Theorem $3.9$ in
\cite{sagar2022linear} respectively}.
\end{remark}
\begin{example}
    {\rm Set $n=2,m=3$ and $L_0=\{1,2\},L_1=\{2,3\}\subseteq [3]=\{1,2,3\}$. Let $D=\Delta_{L_0}+\omega\Delta_{L_1}\subseteq \mathbb{F}_4^3$ so that $D^c=(\Delta_{L_0}^c+
    \omega\mathbb{F}_2^3)\bigsqcup(\Delta_{L_0}+\omega\Delta_{L_1}^c)$, where
    \begin{equation*}
        \begin{aligned}
            \Delta_{L_0}^c&=\{(0,0,1),(1,0,1),(0,1,1),(1,1,1)\},\\
            \Delta_{L_1}^c&=\{(1,0,0),(1,1,0),(1,0,1),(1,1,1)\}.
        \end{aligned}
    \end{equation*}
    Then the code $C_{D^c}$ is a $[48,4,36]$ linear code over $\mathbb{F}_4$ and it is a $3$-weight linear code and distance optimal by \cite{codetables}. In
    particular, $C_{D^c}$ has codewords of weights $0,36,40,48$}.
\end{example}
If all the $n$ subsets are nonempty and equal then we have the following result.
\begin{theorem}\label{3.4}
    Let $L$ be an nonempty subset of $[m]$ and let $D=\Delta_{L}+\omega\Delta_{L}+\cdots+\omega^{n-1}\Delta_{L}\subseteq \mathbb{F}_{2^n}^m$, where $m\in\mathbb{N}$.
    Then the code $C_{D^c}$ is a $2$-weight linear code over $\mathbb{F}_{2^n}$ of length $2^{nm}-2^{n|L|}$, dimension $m$ and distance
    $(2^n-1)\times(2^{n(m-1)}-2^{n(|L|-1)})$. In particular, $C_{D^c}$ have nonzero codewords of weights $(2^n-1)\times(2^{n(m-1)}-2^{n(|L|-1)})$ and
    $(2^n-1)\times 2^{n(m-1)}$. Furthermore, $C_{D^c}$ is a Griesmer code hence it is distance optimal. In fact, it is a minimal code if
    $n(m-|L|)\geq n+1$.
\end{theorem}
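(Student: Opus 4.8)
The plan is to read off all the data for $C_{D^c}$ from the already known $C_{D^*}$, using the weight-balance identity of Lemma~\ref{im}. Note first that this theorem is \emph{not} a special case of Theorem~\ref{th3.3}: taking all $L_i=L$ forces $R_i=L\setminus L=\emptyset$, so the hypothesis there fails and a separate (but easier) argument is needed. The gain is that, by Theorem~\ref{th3.2}, $C_{D^*}$ is now a $1$-weight code, so only two weights can survive in $C_{D^c}$. Concretely, Theorem~\ref{th3.2} gives $\wt(c_{D^*}(v))=0$ exactly when $v\in\ker(c_{D^*})=\{v:v_i=0\text{ for }i\in L\}$ and $\wt(c_{D^*}(v))=(2^n-1)\,2^{n(|L|-1)}$ otherwise. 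The length is immediate: $|D|=|\Delta_L|^{\,n}=2^{n|L|}$, so $|D^c|=2^{nm}-2^{n|L|}$.

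Next I would feed these two values into Lemma~\ref{im}, which for $v\neq 0$ reads $\wt(c_{D^c}(v))=(2^n-1)\,2^{n(m-1)}-\wt(c_{D^*}(v))$. This produces exactly the two claimed weights: the larger value $(2^n-1)\,2^{n(m-1)}$ when $0\neq v\in\ker(c_{D^*})$, and the smaller value $(2^n-1)\bigl(2^{n(m-1)}-2^{n(|L|-1)}\bigr)$ when $v\notin\ker(c_{D^*})$. To certify that $C_{D^c}$ is genuinely $2$-weight I would check that both sets of $v$ are nonempty: the first needs $\ker(c_{D^*})\neq 0$, i.e. $|L|<m$, and the second needs $L\neq\emptyset$. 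The minimum distance is then the smaller value $d=(2^n-1)\bigl(2^{n(m-1)}-2^{n(|L|-1)}\bigr)$, and since $d>0$ we get $\ker(c_{D^c})=0$, hence $\dim C_{D^c}=m$.

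For the Griesmer property I would substitute $q=2^n$, $k=m$ and this $d$ into $\sum_{i=0}^{m-1}\lceil d/2^{ni}\rceil$. This is precisely the computation in the proof of Theorem~\ref{th3.3} specialized to $\sum_{i=0}^{n-1}|L_i|=n|L|$: here $2^{\sum|L_i|-n}=2^{n(|L|-1)}$, the residue $l$ is $0$ and the index $k$ there equals $|L|$, so the floor sum collapses to $\sum_{i=0}^{|L|-1}(2^n-1)2^{ni}=2^{n|L|}-1$ and the full Griesmer sum becomes $(2^{nm}-1)-(2^{n|L|}-1)=2^{nm}-2^{n|L|}$, which equals the length. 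Hence $C_{D^c}$ is a Griesmer code, and in particular distance optimal.

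Finally, for minimality I would apply Lemma~\ref{minimal} by computing
\[
\frac{\wt_{min}}{\wt_{max}}=\frac{(2^n-1)\bigl(2^{n(m-1)}-2^{n(|L|-1)}\bigr)}{(2^n-1)\,2^{n(m-1)}}=1-2^{-n(m-|L|)},
\]
which exceeds $\tfrac{2^n-1}{2^n}=1-2^{-n}$ iff $2^{-n(m-|L|)}<2^{-n}$, that is iff $n(m-|L|)\geq n+1$. The only points requiring genuine care are the bookkeeping that forces both weight values to occur (the condition $|L|<m$) and the clean handling of the ceiling/floor terms in the Griesmer sum; the rest is direct substitution into Lemma~\ref{im}, Theorem~\ref{th3.2}, Lemma~\ref{minimal} and the Griesmer bound.
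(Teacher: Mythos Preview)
Your proposal is correct and follows essentially the same route as the paper, which simply says the result is ``a straightforward result from Theorem~\ref{th3.2} and Theorem~\ref{th3.3}.'' You are in fact more careful than the paper: you correctly note that the hypothesis $R_0,\dots,R_{n-2}\neq\emptyset$ of Theorem~\ref{th3.3} fails when all $L_i=L$, so one must recycle the \emph{proof} of Theorem~\ref{th3.3} (the use of Lemma~\ref{im} together with the Griesmer and minimality computations, none of which use that hypothesis) rather than its statement, and you also flag the need for $|L|<m$ to ensure the second weight actually occurs and the code is nontrivial.
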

\begin{proof}
    It is a straightforward result from Theorem \ref{th3.2} and Theorem \ref{th3.3}.
\end{proof}
\begin{remark}\rm{
    The case $n=3$ in Theorem $\ref{3.4}$ is Theorem 3.10 in \cite{sagar2022linear}.}
\end{remark}
\begin{example}
    {\rm Set $n=2,m=4$ and $L=\{1,2\}\subseteq[4]=\{1,2,3,4\}$. Let $D=\Delta_{L}+\omega\Delta_{L}\subseteq \mathbb{F}_4^4$ so that $D^c=(\Delta_{L}^c+
    \omega\mathbb{F}_2^4)\bigsqcup(\Delta_{L}+\omega\Delta_{L}^c)$, where
    \begin{equation*}
        \begin{aligned}
            \Delta_{L}^c=&\{(0,0,1,0),(0,0,0,1),(1,0,1,0),(1,0,0,1),(0,1,1,0),(0,1,0,1),\\
            &(0,0,1,1),(1,1,1,0),(1,1,0,1),(1,0,1,1),(0,1,1,1),(1,1,1,1)\}.
        \end{aligned}
    \end{equation*}
    Then the code $C_{D^c}$ is a $[240,4,180]$ linear code over $\mathbb{F}_4$ and it is a $3$-weight linear code and distance optimal by \cite{codetables}. In
    particular, $C_{D^c}$ has codewords of weights $0,180,192$. Since $\frac{180}{192}>\frac{3}{4}$, it is a minimal code by Lemma $\ref{minimal}$.}
\end{example}
\section{The subfield code with respect to $C_{D^*}$ and $C_{D^c}$}
In this section, we study subfield codes with respect to $C_{D^*}$ and $C_{D^c}$ discussed in the previous section. In \cite{sagar2022linear}, the authors
gave the length and distance of the subfield code with respect to $C_{D^*}$ over $\mathbb{F}_{2^n}$. Now we also give the dimension of $C_{D^*}$.
\begin{theorem}\label{CD2}
    Let $L_i$ be nonempty subsets of $[m]$ and let $D=\Delta_{L_0}+\omega\Delta_{L_1}+\cdots+\omega^{n-1}\Delta_{L_{n-1}}\subseteq \mathbb{F}_{2^n}^m$. Then the
    subfield code $C_{D^*}^{(2)}$ with respect to $C_{D^*}$ is a 1-weight linear code over $\mathbb{F}_{2^n}$ of length $2^{\sum_{i=0}^{n-1}|L_i|}-1$, dimension
    $\sum_{i=0}^{n-1}|L_i|$ and distance $2^{\sum_{i=0}^{n-1}|L_i|-1}$. In particular, $C_{D^*}^{(2)}$ is a minimal code. Furthermore, it is a Griesmer code hence it
    is distance optimal.
\end{theorem}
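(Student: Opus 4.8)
The plan is to pin down the defining set of the subfield code through Theorem \ref{subfield} and then recognize it as the defining set of an ordinary binary simplicial-complex code, after which all parameters fall out of the $n=1$ instance of the results already proved. First I would apply the column-by-column construction underlying Theorem \ref{subfield} to $C_{D^*}$, whose defining set is $D^*=D\setminus\{0\}$ with $D=\Delta_{L_0}+\omega\Delta_{L_1}+\cdots+\omega^{n-1}\Delta_{L_{n-1}}$ and $D_i=\Delta_{L_i}\subseteq\mathbb{F}_2^m$. Since each column $d\in D^*$ of the generator matrix is replaced independently by the stacked column $(d^{(0)},d^{(1)},\ldots,d^{(n-1)})$, where $d=d^{(0)}+\omega d^{(1)}+\cdots+\omega^{n-1}d^{(n-1)}$, the subfield code $C_{D^*}^{(2)}$ has defining set
\[
\widetilde{D}=\{(d^{(0)},d^{(1)},\ldots,d^{(n-1)}) : d^{(0)}+\omega d^{(1)}+\cdots+\omega^{n-1}d^{(n-1)}\in D^*\}.
\]
Because $\{1,\omega,\ldots,\omega^{n-1}\}$ is a basis of $\mathbb{F}_{2^n}$ over $\mathbb{F}_2$, the decomposition $d\mapsto(d^{(0)},\ldots,d^{(n-1)})$ is a bijection from $D$ onto $\Delta_{L_0}\times\cdots\times\Delta_{L_{n-1}}$ sending the unique zero of $D$ to the zero tuple, so $\widetilde{D}=(\Delta_{L_0}\times\cdots\times\Delta_{L_{n-1}})\setminus\{\mathbf{0}\}$ inside $\mathbb{F}_2^{nm}$.

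The key observation is that this Cartesian product is itself a simplicial complex of the simplest kind. Identifying the coordinates of $\mathbb{F}_2^{nm}$ with pairs $(i,j)$, $0\le i\le n-1$, $j\in[m]$, and writing $\varepsilon_{(i,j)}$ for the corresponding standard basis vector, each factor $\Delta_{L_i}$ is the subspace spanned by $\{\varepsilon_{(i,j)} : j\in L_i\}$, so the product is spanned by $\{\varepsilon_{(i,j)} : 0\le i\le n-1,\ j\in L_i\}$. Hence $\Delta_{L_0}\times\cdots\times\Delta_{L_{n-1}}=\Delta_{\widetilde{L}}$ with $\widetilde{L}=\{(i,j) : 0\le i\le n-1,\ j\in L_i\}\subseteq[nm]$, a set of size $|\widetilde{L}|=\sum_{i=0}^{n-1}|L_i|$. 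Consequently $C_{D^*}^{(2)}=C_{\Delta_{\widetilde{L}}^*}$ is exactly the binary linear code whose defining set is $\Delta_{\widetilde{L}}\setminus\{\mathbf{0}\}$ for a single simplicial complex $\Delta_{\widetilde{L}}$.

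From here the parameters follow by a direct character computation (which is the $n=1$ specialization of Theorem \ref{th3.2}). For $v\in\mathbb{F}_2^{nm}$, deleting $d=\mathbf{0}$ changes nothing since $v\cdot\mathbf{0}=0$, so the codeword weight equals $\tfrac{1}{2}\bigl(2^{|\widetilde{L}|}-\chi_v(\Delta_{\widetilde{L}})\bigr)$; by Corollary \ref{chi} this is $0$ when $\operatorname{Supp}(v)\cap\widetilde{L}=\emptyset$ and $2^{|\widetilde{L}|-1}$ otherwise. This gives the $1$-weight property and distance $2^{|\widetilde{L}|-1}=2^{\sum_{i=0}^{n-1}|L_i|-1}$; the kernel $\{v:\operatorname{Supp}(v)\cap\widetilde{L}=\emptyset\}$ has dimension $nm-|\widetilde{L}|$, so $\dim C_{D^*}^{(2)}=|\widetilde{L}|=\sum_{i=0}^{n-1}|L_i|$, while the length is $|\Delta_{\widetilde{L}}^*|=2^{|\widetilde{L}|}-1$. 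Minimality is then immediate from Lemma \ref{minimal}, since for a $1$-weight code $\wt_{min}/\wt_{max}=1>\tfrac12$, and the Griesmer equality $\sum_{i=0}^{|\widetilde{L}|-1}\lceil 2^{|\widetilde{L}|-1}/2^i\rceil=2^{|\widetilde{L}|}-1$ matches the length, certifying that $C_{D^*}^{(2)}$ is Griesmer and hence distance optimal.

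The main obstacle, modest as it is, is confined to the first two paragraphs: tracking the defining set faithfully through the subfield construction, checking that removing the zero of $D^*$ corresponds precisely to removing the zero tuple, and verifying that the product of the subspaces $\Delta_{L_i}$ is generated by the single index set $\widetilde{L}$. Once this reduction to a binary single-simplicial-complex code is in place, no new computation is required, as everything is the already-established case.
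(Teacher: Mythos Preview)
Your proof is correct. Both you and the paper start from Theorem \ref{subfield} to identify the defining set of $C_{D^*}^{(2)}$ as $(\Delta_{L_0}\times\cdots\times\Delta_{L_{n-1}})\setminus\{0\}\subseteq\mathbb{F}_2^{nm}$, and both end with the same weight formula, kernel count, and Griesmer check. The difference is packaging: the paper computes the codeword weight directly as $2^{\sum|L_i|-1}\bigl(1-\prod_i\varphi(\alpha_i|L_i)\bigr)$ by factoring the character sum over the product, whereas you first observe that the product $\prod_i\Delta_{L_i}$ is itself a single simplicial complex $\Delta_{\widetilde L}$ with $|\widetilde L|=\sum_i|L_i|$, and then invoke the $n=1$ instance of Theorem \ref{th3.2}. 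Your route is slightly more conceptual and avoids repeating the character computation, at the cost of introducing the auxiliary index set $\widetilde L$; the paper's route is more self-contained. Either way the substance is the same computation, since $\chi_v(\Delta_{\widetilde L})=\prod_i\chi_{\alpha_i}(\Delta_{L_i})$.
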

\begin{proof}
    From Theorem \ref{subfield}, $C_{D^*}^{(2)}$ has defining set
    \begin{equation*}
        D^{(2)}=\{(d_0,d_1,\cdots,d_{n-1}):d_i\in \Delta_{L_i},0\leq i\leq n-1\}\subseteq (\mathbb{F}_2^m)^n.
    \end{equation*}
    Note that the length of $C_{D^*}^{(2)}$ is $|D^{(2)^*}|=|D^*|=2^{\sum_{i=0}^{n-1}|L_i|}-1$. Observe that the map
    $c_{D^*}^{(2)}:(\mathbb{F}_2^m)^n\longrightarrow C_{D^*}^{(2)}$ defined by $c_{D^*}^{(2)}(z)=(z\cdot d)_{d\in D^{(2)^*}}$ is a surjective linear transformation.
    Write $z=(\alpha_0,\alpha_1,\cdots,\alpha_{n-1})$ where $\alpha_i\in\mathbb{F}_2^m$, now we have
    \begin{equation}\label{51}
        \begin{aligned}
            \wt(c_{D^*}^{(2)}(z))
            &=|D|-\frac{1}{2}\sum_{d_0\in \Delta_{L_0}}\sum_{d_1\in \Delta_{L_1}}\cdots\sum_{d_{n-1}\in \Delta_{L_{n-1}}}\left(1+(-1)^{\sum_{i=0}^{n-1}
            \alpha_i\cdot d_i}\right)\\
            &=\frac{1}{2}|D|-\frac{1}{2}\prod_{i=0}^{n-1}\sum_{d_i\in \Delta_{L_i}}(-1)^{\alpha_i\cdot d_i}\\
            &=\frac{1}{2}|D|-\frac{1}{2}\chi_{\alpha_0}(\Delta_{L_0})\chi_{\alpha_1}(\Delta_{L_1})\cdots\chi_{\alpha_{n-1}}(\Delta_{L_{n-1}})\\
            &=2^{\sum_{i=1}^{n-1}|L_i|-1}(1-\varphi(\alpha_0|L_0)\varphi(\alpha_1|L_1)\cdots\varphi(\alpha_{n-1}|L_{n-1})).
        \end{aligned}
    \end{equation}
    Thus $C_{D^*}^{(2)}$ is a 1-weight linear code over $\mathbb{F}_{2^n}$ of distance $2^{\sum_{i=0}^{n-1}|L_i|-1}$. Note that
    \begin{equation*}
        \begin{aligned}
            \wt(c_{D^*}^{(2)}(z))=0
            &\Longleftrightarrow \varphi(\alpha_0|L_0)=\varphi(\alpha_1|L_1)=\cdots=\varphi(\alpha_{n-1}|L_{n-1})=1\\
            &\Longleftrightarrow \operatorname{Supp}(\alpha_i)\cap L_i=\emptyset,~ \forall~ i=0,1,\cdots,n-1.
        \end{aligned}
    \end{equation*}
    Hence $|\ker(c_{D^*}^{(2)})|=\prod_{i=0}^{n-1}2^{m-|L_i|}=2^{nm-\sum_{i=0}^{n-1}|L_i|}$.
    By the first isomorphism theorem of groups, we have $|C_{D^*}^{(2)}|=\frac{\left|(\mathbb{F}_2^m)^n\right|}{\left|\ker(c_{D^*}^{(2)})\right|}=2^{\sum_{i=0}^{n-1}|L_i|}$. Hence
    $\dim C_{D^*}^{(2)}=\sum_{i=0}^{n-1}|L_i|$. Lastly we have
    \begin{equation*}
        \begin{aligned}
            &\quad\sum_{i=0}^{\sum_{i=0}^{n-1}|L_i|-1}\left\lceil \frac{2^{\sum_{i=0}^{n-1}|L_i|-1}}{2^i} \right\rceil\\
            &=\sum_{i=0}^{\sum_{i=0}^{n-1}|L_i|-1}2^i\\
            &=2^{\sum_{i=0}^{n-1}|L_i|}-1.
        \end{aligned}
    \end{equation*}
    Therefore, $C_{D^*}^{(2)}$ is a Griesmer code.
\end{proof}
\begin{remark}
    {\rm The cases $n=2,3$ in theorem $\ref{CD2}$ are Proposition $5.1$ in \cite{wu2022quaternary} and Theorem $4.1$ in \cite{sagar2022linear} respectively.}
\end{remark}
\begin{example}{\rm
    Set $n=3,m=4$ and $L_0=\{1,2\},L_1=\{2,3\},L_2=\{3,4\}\subseteq [4]=\{1,2,3,4\}$. Consider the definition set $\Delta_{L_0}+\omega\Delta_{L_1}+\omega^2\Delta_{L_2}
    \subseteq \mathbb{F}_8^4$ so that $D^{(2)}=\{d_0,d_1,d_2:d_i\in D_i\}$ where
    \begin{equation*}
        \begin{aligned}
            \Delta_{L_0}&=\{(0,0,0,0),(1,0,0,0),(0,1,0,0),(1,1,0,0)\},\\
            \Delta_{L_1}&=\{(0,0,0,0),(0,1,0,0),(0,0,1,0),(0,1,1,0)\},\\
            \Delta_{L_2}&=\{(0,0,0,0),(0,0,1,0),(0,0,0,1),(0,0,1,1)\}.
        \end{aligned}
    \end{equation*}
    Then the code $C_{D^*}^{(2)}$ is a $[15,4,8]$ linear $1$-weight code over $\mathbb{F}_2$ which is distance optimal by \cite{codetables} and minimal.}
\end{example}
Suppose $D=\Delta_{L_0}+\omega\Delta_{L_1}+\cdots+\omega^{n-1}\Delta_{L_{n-1}}$, then from Equation \eqref{eq2} we have
\begin{equation*}
    \begin{aligned}
        D^{c}=&\left(\Delta_{L_0}^{c}+\omega \mathbb{F}_{2}^{m}+\cdots+\omega^{n-1} \mathbb{F}_{2}^{m}\right) \bigsqcup \\
        &\left(\Delta_{L_0}+\omega \Delta_{L_1}^{c}+\omega^{2} \mathbb{F}_{2}^{m} \cdots+\omega^{n-1} \mathbb{F}_{2}^{m}\right) \bigsqcup \\
        & \vdots \\
        &\left(\Delta_{L_0}+\omega \Delta_{L_1}+\cdots+\omega^{n-2} \Delta_{L_{n-2}}^{c}+\omega^{n-1} \mathbb{F}_{2}^{m}\right) \bigsqcup \\
        &\left(\Delta_{L_0}+\omega \Delta_{L_1}+\cdots+\omega^{n-2} \Delta_{L_{n-2}}+\omega^{n-1} \Delta_{L_{n-1}}^{c}\right),
    \end{aligned}
\end{equation*}
where $\bigsqcup$ indicates disjoint union.

If $z=(\alpha_0,\alpha_1,\cdots,\alpha_{n-1})\in (\mathbb{F}_2^m)^n$, then we have
\begin{equation*}
    \begin{aligned}
        \wt(c_{D^c}^{(2)}(z))
        &=|D^c|-\frac{1}{2}\chi_{\alpha_0}(\Delta_{L_0}^c)\chi_{\alpha_1}(\mathbb{F}_{2}^{m})\cdots\chi_{\alpha_{n-1}}(\mathbb{F}_{2}^{m})\\
        &\quad-\frac{1}{2}\chi_{\alpha_0}(\Delta_{L_0})\chi_{\alpha_1}(\Delta_{L_1}^c)\chi_{\alpha_2}(\mathbb{F}_{2}^{m})\cdots\chi_{\alpha_{n-1}}
        (\mathbb{F}_{2}^{m})\\
        &\quad-\cdots\\
        &\quad-\frac{1}{2}\chi_{\alpha_0}(\Delta_{L_0})\cdots\chi_{\alpha_{n-3}}(\Delta_{L_{n-3}})\chi_{\alpha_{n-2}}(\Delta_{L_{n-2}}^c)
        \chi_{\alpha_{n-1}}(\mathbb{F}_{2}^{m})\\
        &\quad-\frac{1}{2}\chi_{\alpha_0}(\Delta_{L_0})\cdots\chi_{\alpha_{n-2}}(\Delta_{L_{n-2}})\chi_{\alpha_{n-1}}(\Delta_{L_{n-1}}^c)\\
        &=\frac{1}{2}(|D^c|-2^{nm}\delta_{0,z})+\frac{1}{2}\chi_{\alpha_0}(\Delta_{L_0})\chi_{\alpha_1}(\Delta_{L_1})\cdots\chi_{\alpha_{n-1}}(\Delta_{L_{n-1}})\\
        &=2^{nm-1}\times(1-\delta_{0,z})-\frac{1}{2}|D|+\frac{1}{2}\chi_{\alpha_0}(\Delta_{L_0})\chi_{\alpha_1}(\Delta_{L_1})\cdots\chi_{\alpha_{n-1}}
        (\Delta_{L_{n-1}}).
    \end{aligned}
\end{equation*}
From Equation \eqref{51}, we have
\begin{equation*}
    \wt(c_{D^*}^{(2)}(z))=\frac{1}{2}|D|-\frac{1}{2}\chi_{\alpha_0}(\Delta_{L_0})\chi_{\alpha_1}(\Delta_{L_1})\cdots\chi_{\alpha_{n-1}}(\Delta_{L_{n-1}}),
\end{equation*}
thus we have the following result.
\begin{lemma}
    Let the symbols be the same as above. For $z=(\alpha_0,\alpha_1,\cdots,\alpha_{n-1})\in (\mathbb{F}_2^m)^n$, we have
    \begin{equation}\label{sub}
        \wt(c_{D^*}^{(2)}(z))+\wt(c_{D^c}^{(2)}(z))=2^{nm-1}\times(1-\delta_{0,z}).
    \end{equation}
\end{lemma}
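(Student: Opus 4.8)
The plan is to derive \eqref{sub} by simply adding the two weight expressions displayed immediately above, since every term apart from the common quantity $2^{nm-1}(1-\delta_{0,z})$ cancels in pairs. Concretely, I would first recall from Equation \eqref{51} that
\[
\wt(c_{D^*}^{(2)}(z))=\tfrac12|D|-\tfrac12\chi_{\alpha_0}(\Delta_{L_0})\chi_{\alpha_1}(\Delta_{L_1})\cdots\chi_{\alpha_{n-1}}(\Delta_{L_{n-1}}),
\]
and then invoke the companion formula derived just above the lemma, namely
\[
\wt(c_{D^c}^{(2)}(z))=2^{nm-1}(1-\delta_{0,z})-\tfrac12|D|+\tfrac12\chi_{\alpha_0}(\Delta_{L_0})\chi_{\alpha_1}(\Delta_{L_1})\cdots\chi_{\alpha_{n-1}}(\Delta_{L_{n-1}}).
\]
Adding these, the terms $-\tfrac12|D|$ and $+\tfrac12|D|$ cancel, and the two products of $\chi$-values cancel as well, leaving exactly $2^{nm-1}(1-\delta_{0,z})$, which is \eqref{sub}. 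In this form the lemma is an immediate corollary of the two preceding computations.

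The substantive ingredient, which I would establish before invoking it, is the formula for $\wt(c_{D^c}^{(2)}(z))$. I would start from the disjoint-union decomposition of $D^{c}$ recorded above and write $\wt(c_{D^c}^{(2)}(z))=\tfrac12|D^c|-\tfrac12\sum_{d}(-1)^{z\cdot d}$, where the sum runs over the defining set of the subfield code. Each block of the decomposition then contributes a product of the shape $\prod_{j<k}\chi_{\alpha_j}(\Delta_{L_j})\cdot\chi_{\alpha_k}(\Delta_{L_k}^c)\cdot\prod_{j>k}\chi_{\alpha_j}(\mathbb{F}_2^m)$. Substituting $\chi_{\alpha_k}(\Delta_{L_k}^c)=\chi_{\alpha_k}(\mathbb{F}_2^m)-\chi_{\alpha_k}(\Delta_{L_k})$ and $\chi_{\alpha_j}(\mathbb{F}_2^m)=2^m\delta_{0,\alpha_j}$, the sum over the $n$ blocks telescopes, collapsing to the difference between the all-$\mathbb{F}_2^m$ term and the all-$\Delta_{L_j}$ term, that is to $2^{nm}\delta_{0,z}-\prod_{j}\chi_{\alpha_j}(\Delta_{L_j})$, where I use $\delta_{0,z}=\prod_{j}\delta_{0,\alpha_j}$. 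Combining this with $|D^c|=2^{nm}-|D|$ yields the displayed formula for $\wt(c_{D^c}^{(2)}(z))$.

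The only place requiring care is this telescoping step, which is the subfield-code analogue of the computation leading to Equation \eqref{Dc} in Section 3; once the character sum over each block is identified and the identity $\chi_x(P)+\chi_x(P^c)=2^m\delta_{0,x}$ is applied, the collapse is automatic and the $\delta_{0,z}$ factor emerges from the all-$\mathbb{F}_2^m$ endpoint of the telescope. Since both weight formulas are then in hand, the lemma follows purely by the additive cancellation described in the first paragraph, with no estimation or case analysis needed.
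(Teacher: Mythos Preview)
Your proposal is correct and follows essentially the same approach as the paper: the paper derives the explicit formula for $\wt(c_{D^c}^{(2)}(z))$ from the disjoint-union decomposition of $D^c$ and the identity $\chi_x(P)+\chi_x(P^c)=2^m\delta_{0,x}$, recalls \eqref{51} for $\wt(c_{D^*}^{(2)}(z))$, and then states the lemma as the immediate consequence of adding the two expressions. Your account of the telescoping collapse is in fact slightly more explicit than the paper's, but the logic is identical.
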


\begin{theorem}\label{CDC}
    Suppose $L_i$ be nonempty subsets of $[m]$ such that at least one subset is proper where $m\in\mathbb{N}$. Let
    $D=\Delta_{L_0}+\omega\Delta_{L_1}+\cdots+\omega^{n-1}\Delta_{L_{n-1}}\subseteq \mathbb{F}_{2^n}^m$. Then the code  $C_{D^c}^{(2)}$ is a binary 2-weight linear code
    of length $2^{nm}-2^{\sum_{i=0}^{n-1}|L_i|}$, dimension $nm$ and distance $2^{nm-1}-2^{\sum_{i=0}^{n-1}|L_i|-1}$. In particular, $C_{D^c}^{(2)}$ have  nonzero
    codewords of weights $2^{nm-1}$ and $2^{nm-1}-2^{\sum_{i=0}^{n-1}|L_i|-1}$. Moreover, it is a Griesmer code and hence distance optimal. Further, it is a minimal
    code if $2^{\sum_{i=0}^{n-1}|L_i|}\leq nm-2$.
\end{theorem}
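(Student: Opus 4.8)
The plan is to reduce the whole statement to the complementary-weight identity \eqref{sub} together with the one-weight description of $C_{D^*}^{(2)}$ already obtained in Theorem \ref{CD2}. Writing $s=\sum_{i=0}^{n-1}|L_i|$ throughout, I would first rewrite \eqref{sub} as
$$\wt(c_{D^c}^{(2)}(z))=2^{nm-1}(1-\delta_{0,z})-\wt(c_{D^*}^{(2)}(z)),\qquad z\in(\mathbb{F}_2^m)^n.$$
By Equation \eqref{51} in the proof of Theorem \ref{CD2}, for nonzero $z$ the value $\wt(c_{D^*}^{(2)}(z))$ equals $0$ when $\prod_{i=0}^{n-1}\varphi(\alpha_i|L_i)=1$ and equals $2^{s-1}$ otherwise. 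Substituting, the nonzero codewords of $C_{D^c}^{(2)}$ take exactly the two values $2^{nm-1}$ and $2^{nm-1}-2^{s-1}$; since all $L_i$ are nonempty and some $L_i$ is proper, both cases are realized by suitable $z$, so $C_{D^c}^{(2)}$ is a $2$-weight code with minimum distance $2^{nm-1}-2^{s-1}$.

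Next I would pin down the length and the dimension. Because $0\in\Delta_{L_i}$ for every $i$, we have $0\in D$ and hence $|D^{c}|=2^{nm}-|D|=2^{nm}-2^{s}$, which is the length both of $C_{D^c}$ and of its subfield code. The hypothesis that some $L_i$ is proper gives $s<nm$, so the minimum distance $2^{nm-1}-2^{s-1}$ is strictly positive; consequently $c_{D^c}^{(2)}$ has trivial kernel and $\dim C_{D^c}^{(2)}=nm$.

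The main computational step, and the one I expect to be the most error-prone, is verifying the Griesmer equality. I would evaluate $\sum_{i=0}^{nm-1}\lceil d/2^{i}\rceil$ with $d=2^{nm-1}-2^{s-1}$ by splitting the range at $i=s$: for $0\le i\le s-1$ the ratio $d/2^{i}=2^{nm-1-i}-2^{s-1-i}$ is an integer, whereas for $s\le i\le nm-1$ one has $0<2^{s-1-i}<1$, so the ceiling collapses to $2^{nm-1-i}$. Summing the two geometric pieces then yields
$$\sum_{i=0}^{s-1}\bigl(2^{nm-1-i}-2^{s-1-i}\bigr)+\sum_{i=s}^{nm-1}2^{nm-1-i}=\bigl(2^{nm}-2^{nm-s}-2^{s}+1\bigr)+\bigl(2^{nm-s}-1\bigr)=2^{nm}-2^{s},$$
which is precisely the length; hence $C_{D^c}^{(2)}$ meets the Griesmer bound and is distance optimal. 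Keeping careful track of exactly which terms contribute an integer and which are rounded up is the delicate part here.

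Finally, minimality would follow from Lemma \ref{minimal}. With $\wt_{\min}=2^{nm-1}-2^{s-1}$ and $\wt_{\max}=2^{nm-1}$,
$$\frac{\wt_{\min}}{\wt_{\max}}=1-2^{s-nm},$$
and this exceeds $\frac{q-1}{q}=\frac12$ precisely when $2^{s-nm}<2^{-1}$, that is, when $s=\sum_{i=0}^{n-1}|L_i|\le nm-2$. Under this hypothesis every nonzero codeword of $C_{D^c}^{(2)}$ is minimal, completing the proof.
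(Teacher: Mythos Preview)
Your proposal is correct and follows essentially the same route as the paper: both use the identity \eqref{sub} together with Theorem \ref{CD2} to read off the two weights, deduce that the kernel is trivial (hence dimension $nm$), verify the Griesmer bound, and apply Lemma \ref{minimal} for minimality. The only cosmetic difference is in the Griesmer calculation: the paper uses the identity $\lceil a-b\rceil=a-\lfloor b\rfloor$ for integer $a$ to rewrite the sum as $\sum_i 2^{nm-1-i}-\sum_i\lfloor 2^{s-1-i}\rfloor$, whereas you split the range at $i=s$ and handle the ceiling directly; both arrive at $2^{nm}-2^{s}$ by elementary geometric sums. Your version is slightly more careful in explicitly noting that both weights are actually attained, which the paper leaves implicit.
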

\begin{proof}
    Clearly, the length of the code $C_{D^c}^{(2)}$ is $|D^c|=2^{nm}-|D|=2^{nm}-2^{\sum_{i=0}^{n-1}|L_i|}$. For nonzero $z\in (\mathbb{F}_2^m)^n$, from Equation
    \eqref{sub} and Theorem \ref{CD2} we have
    \begin{equation*}
        \begin{aligned}
            \wt(c_{D^c}^{(2)}(z))&=2^{nm-1}-\wt(c_{D^*}^{(2)}(z))\\
            &>2^{nm-1}-2^{\sum_{i=1}^{n-1}|L_i|-1}\\
            &>0
        \end{aligned}
    \end{equation*}
    since $L_i$ is a proper subset of $[m]$ for some $i$. Thus $\ker c_{D^c}^{(2)}=0$ and then $\dim C_{D^c}^{(2)}=nm$. Now we have
    \begin{equation*}
        \begin{aligned}
            \sum_{i=0}^{nm-1}\left\lceil\frac{2^{nm-1}-2^{\sum_{i=0}^{n-1}|L_i|-1}}{2^i}\right\rceil&=\sum_{i=0}^{nm-1}\frac{2^{nm-1}}{2^i}-
            \sum_{i=0}^{nm-1}\left\lfloor\frac{2^{\sum_{i=0}^{n-1}|L_i|-1}}{2^i}\right\rfloor\\
            &=\sum_{i=0}^{nm-1}2^i-\sum_{i=0}^{\sum_{i=0}^{n-1}|L_i|-1}2^i\\
            &=(2^{nm}-1)-(2^{\sum_{i=0}^{n-1}|L_i|}-1)\\
            &=2^{nm}-2^{\sum_{i=0}^{n-1}|L_i|}.
        \end{aligned}
    \end{equation*}
    Therefore $C_{D^c}^{(2)}$ is a Griesmer code. By Lemma \ref{minimal}, we have
    \begin{equation*}
        \begin{aligned}
            \frac{\wt_{min}}{\wt_{max}}&=\frac{2^{nm-1}-2^{\sum_{i=0}^{n-1}|L_i|-1}}{2^{nm-1}}\\
            &=1-2^{\sum_{i=0}^{n-1}|L_i|-nm},
        \end{aligned}
    \end{equation*}
    thus it is a minimal code if $1-2^{\sum_{i=0}^{n-1}|L_i|-nm}>\frac{1}{2}$, which is equivalent to $\sum_{i=0}^{n-1}|L_i|\leq nm-2$.
\end{proof}
\begin{remark}{\rm
    The case $n=3$ in Theorem \ref{CDC} is Theorem 4.7 in \cite{sagar2022linear}.}
\end{remark}
\begin{example}{\rm
    Set $n=2,m=3$ and $L_0=\{1,2\},L_1=\{2,3\}\subseteq [3]=\{1,2,3\}$. Let $D=\Delta_{L_0}+\omega\Delta_{L_1}\subseteq \mathbb{F}_4^3$ so that $D^c=(\Delta_{L_0}^c+
    \omega\mathbb{F}_2^3)\bigsqcup(\Delta_{L_0}+\omega\Delta_{L_1}^c)$, where
    \begin{equation*}
        \begin{aligned}
            \Delta_{L_0}^c&=\{(0,0,1),(1,0,1),(0,1,1),(1,1,1)\},\\
            \Delta_{L_1}^c&=\{(1,0,0),(1,1,0),(1,0,1),(1,1,1)\}.
        \end{aligned}
    \end{equation*}
    Then the code $C_{D^c}^{(2)}$ is a $[48,6,24]$ linear $2$-weight code over $\mathbb{F}_2$ which is distance optimal by \cite{codetables}. In particular,
    $C_{D^c}^{(2)}$ has codewords of weights $0,24,32$. Since $\frac{24}{32}>\frac{1}{2}$, it is a minimal code by Lemma $\ref{minimal}$.}
\end{example}
\section{Conclusion}
In this article, we discussed the linear codes over $\mathbb{F}_{2^n}$ constructed from simplical complexes with one maximal element. This work is mainly inspired by
\cite{sagar2022linear} which studied the linear codes over $\mathbb{F}_{2^3}$ and also proposed some conjectures about the results over $\mathbb{F}_{2^n}$. In the
caculations of the weights of the codewords in these codes, we used the tools of LFSR sequences and boolean functions. We obtained five infinite families of distance
optimal codes and gave sufficient conditions for these codes to be minimal. These results generalized the cases in some previous researches which we listed in the
remarks of this article.

It is interesting to study the linear codes from simplical complexes with more than one maximal elements and construct more general optimal codes.
\vskip 4mm
\noindent {\bf Acknowledgement.} This work was supported by NSFC (Grant Nos. 12271199, 11871025).

\bibliographystyle{plain}

\end{document}